\documentclass[12pt,onecolumn,twoside]{IEEEtran}
\usepackage{amsmath}
\usepackage{amsthm}
\usepackage{amsfonts}
\usepackage{graphicx}
\usepackage{latexsym}
\usepackage{amssymb}
\usepackage{color}
\usepackage{url}

\usepackage{xfrac}

\DeclareMathAlphabet{\mathbfsl}{OT1}{ppl}{b}{it} 

\makeatletter
\DeclareRobustCommand{\nsbinom}{\genfrac[]\z@{}}
\makeatother

\newcommand{\abs}[1]{\left|#1\right|}

\newcommand{\mmod}{{\mbox{mod}}}

\newcommand{\field}[1]{\mathbb{#1}}

\newcommand{\Z}{\field{Z}}

\newcommand{\F}{\field{F}}

\newcommand{\cA}{{\cal A}}

\newcommand{\cD}{{\cal D}}

\newcommand{\cL}{{\cal L}}

\newcommand{\cT}{{\cal T}}

\newcommand{\bE}{{\bf E}}
\newcommand{\bF}{{\bf F}}

\newcommand{\bfzero}{{\bf 0}}
\newcommand{\bfone}{{\bf 1}}
\newcommand{\bftwo}{{\bf 2}}
\newcommand{\bfthree}{{\bf 3}}
\newcommand{\bffour}{{\bf 4}}

\newcommand{\linadd}{\kern1pt\mbox{\small$\boxplus$}\kern1pt}

\newtheorem{definition}{Definition}
\newtheorem{theorem}{Theorem}
\newtheorem{lemma}{Lemma}

\newtheorem{example}{Example}

\theoremstyle{definition}

\begin{document}

\bibliographystyle{plain}

\title{Optimal Non-Binary Single-Track Gray Code}

\author{
{\sc Tuvi Etzion}\thanks{Department of Computer Science, Technion,
Haifa 3200003, Israel, e-mail: {\tt etzion@cs.technion.ac.il}.
}
}

\maketitle

\begin{abstract}
A single-track Gray code is a cyclic Gray code with codewords of length $n$, over an alphabet of size $m$, such that all the $n$ tracks
that correspond to the $n$ distinct coordinates of the codewords are cyclic shifts of the first track.
Such codes have advantages over the conventional Gray codes in certain quantization and coding applications.
Unless $n=2$, there are no such binary codes that contain all the $2^n$ codewords of length $n$.
In this paper, we prove that such codes of length $p^t$, $n \geq 2$, with $p^{p^t}$ codewords, over $\F_p$, $p$ prime, exist,
for $p=3$ and $p=5$. For larger prime $p$ an appropriate code for $t=2$, implies the existence of such a code for any $t >2$.
If the alphabet size $m$ is not a prime there are also indications that such codes exist.
\end{abstract}

\vspace{0.5cm}


\vspace{0.5cm}



\newpage
\section{Introduction}

A {\bf \emph{length $n$ Gray code}} is a list of distinct words of length $n$ (called the codewords) over an alphabet $\Sigma$,
$$
W_0, W_1,\ldots, W_{P-1},
$$
having the property that any two adjacent codewords (including the last and the first)
$W_i$, $W_{i+1}$ differ in exactly one component. This Gray code has {\bf \emph{period}} $P$.
Gray codes were found by Gray~\cite{Gra53} and introduced by Gilbert~\cite{Gil58}.
Generalizations of Gray codes were given over the years. Such generalizations include listing subsets of the binary $n$-tuples
in a Gray code manner, in such a way that the list has some more pre-specified properties. These properties were
usually forced by a specific application for the Gray code. As an example, we have the uniformly balanced Gray codes.
In certain applications, it is required that the number of bit changes will be uniformly distributed among the bit positions.
Uniformly balanced Gray codes were shown to exist for $n$ which is a power of $2$ by Wagner and West~\cite{WaWe94}.
Bhat and Savage~\cite{BhSa96} have shown that such codes exist for all $n$. During the years Gray codes and their generalizations have
found applications in a variety of areas such as information storage and retrieval~\cite{CCC92},
processor allocation in the hypercube~\cite{ChSh90}, various imaging systems~\cite{CZFYZ,ZPAC},
statistics~\cite{DiHo94}, codes for certain memory devices~\cite{Etz92},
hashing~\cite{Fal88}, puzzles, such as the Chinese Rings and Tower
of Hanoi~\cite{Gar72}, rank-modulation for flash memories~\cite{HoEt14,JMSB}, ordering of documents on shelves~\cite{Los92}, signal
encoding~\cite{Lud81}, data compression~\cite{Ric86}, circuit testing~\cite{RoCo81}, and distributed video coding~\cite{SSWZS}.
An early survey on Gray codes can be found in~\cite{Sav97} and an ongoing excellent survey in~\cite{Mut23}.

Finally, we define single-track Gray codes, which are the topic for the current exposition.
A length $n$, Period~$P$, {\bf \emph{single-track Gray code}} (in short, STGC) is a list of~$P$ distinct binary words of length~$n$, such that
every two consecutive words, including the last and the first, differ in exactly one position. When
looking at the list as a $P \times n$ array (or an $n \times P$ array),
each column (row, respectively) of the array is a cyclic shift of the first column (row, respectively).
These codes were defined first by Hiltgen, Paterson, and Brandestini~\cite{HPB96}
and considered later in~\cite{EtPa96,FaLi23,QYYS,ScEt99,YLWLLZ,ZZBLZ} for their theoretical value and for their various applications.
They were also considered in a new book~\cite{Etz24}.

The first application for these codes was given in~\cite{HPB96}.
A length $n$, period $P$ Gray code can be used to record the absolute angular positions of a rotating wheel by encoding
(e.g., optically) the codewords on concentrically arranged tracks. Then reading heads, mounted in parallel across
the tracks, suffice to recover the codewords. When the heads are nearly aligned with the division between two codewords,
any components which change between those words will be in doubt and a spurious position value may result. Such
quantization errors are minimized by using a Gray code encoding, for then exactly one component can be in doubt
and the two codewords that could possibly result identify the positions bordering the division, resulting in a small angular
error. When high resolution is required, the need for a large number of concentric tracks results in encoders with large
physical dimensions. This poses a problem in the design of small-scale or high-speed devices. Single-track Gray codes
were proposed in~\cite{HPB96} as a way of overcoming these problems. Note, that since all the columns in these codes are cyclic shifts
of the first one, it follows that the code is also a uniformly balanced Gray code, which again can be described by a single column.

The main goal is to construct a length $n$, Period $P$ STGC, where $P$ is large as possible.
Bounds on $P$ can be found in~\cite[Lemma 2]{HPB96} for binary codes.
Of special interest are codes that contain all the words of length $n$. Such a code will be called a {\bf \emph{full-period STGC}}.
If $P=2^n$, then only when $n$ is a power of 2 such a binary code can exist.
However, it was proved in~\cite{ScEt99} that such an STGC exists only for $n=2$, When $n=2^t$ is larger than 2,
the largest possible period of an STGC is $2^t -2t$. Such codes were constructed in~\cite{EtPa96}.
To obtain codes with large periods, the two constructions presented
in the following two theorems are used. For these theorems, we need the following definitions.

For a cyclic sequence $S = [s_0 s_1 ~ \cdots ~ s_{k-1}]$,
the {\bf \emph{the shift operator}} is defined by
$$
\bE [s_0 s_1 ~ \cdots ~ s_{k-1}] = [s_1 ~ \cdots ~ s_{k-1} s_0] ~.
$$
Two cyclic sequences $S_1$ and $S_2$ are the same sequence and called {\bf \emph{equivalent}}, $S_1 \simeq S_2$, if $S_2$ is
a cyclic shift of $S_1$.

\begin{theorem}
\label{thm:single_necklaces}
Let $S_0, S_1,\ldots,S_{r-1}$ be $r$ binary pairwise inequivalent sequences of period $n$,
such that for each $i$, $0 \leq i < r-1$, $S_i$ and $S_{i+1}$ differ in exactly
one coordinate. If there also exists an integer $\ell$, where $\gcd(\ell,n)=1$,
such that  $S_{r-1}$ and $\bE^{\ell} S_0$ differ in exactly one coordinate,
then the following words (read row by row) form a length $n$, period~$nr$ STGC,
$$
\begin{array}{lclcccl}
S_0 & \hspace{0.3cm} & S_1 & \hspace{0.3cm} & \cdots & \hspace{0.3cm} & S_{r-1} \\
\bE^\ell S_0 & \hspace{0.3cm} & \bE^\ell S_1 & \hspace{0.3cm} & \cdots & \hspace{0.3cm} & \bE^\ell S_{r-1} \\
\bE^{2\ell} S_0 & \hspace{0.3cm} & \bE^{2\ell} S_1 & \hspace{0.3cm} & \cdots & \hspace{0.3cm} & \bE^{2\ell} S_{r-1} \\
\vdots & \hspace{0.3cm} & \vdots  & \hspace{0.3cm} & \vdots & \hspace{0.3cm} & \vdots \\
\bE^{(n-1)\ell} S_0 & \hspace{0.3cm} & \bE^{(n-1)\ell} S_1 & \hspace{0.3cm} & \cdots & \hspace{0.3cm} & \bE^{(n-1)\ell} S_{r-1} \\
\end{array}~.
$$
\end{theorem}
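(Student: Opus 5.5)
We have to verify three properties of the list read row by row: the $nr$ words are distinct, consecutive words (including the last and the first) differ in exactly one coordinate, and the array is single-track.

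\textbf{Distinctness.} Each word has the form $\bE^{j\ell} S_i$ with $0\le j<n$ and $0\le i<r$. Suppose $\bE^{j\ell}S_i=\bE^{j'\ell}S_{i'}$. Then $S_i\simeq S_{i'}$, and pairwise inequivalence forces $i=i'$.

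It remains to show that the $n$ shifts $\bE^{j\ell}S_i$ of a single $S_i$ are distinct. Here "period $n$" is read as $S_i$ having least period exactly $n$, so its $n$ cyclic shifts are distinct. Since $\gcd(\ell,n)=1$, the map $j\mapsto j\ell \bmod n$ is a bijection on $\Z_n$. Hence the $n$ words $\bE^{j\ell}S_i$ are exactly these distinct shifts.

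This is the point needing care: I will interpret the hypothesis as saying each $S_i$ is a primitive necklace representative with least period $n$. If some $S_i$ had a smaller period, the shifts would repeat and the construction would fail.

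\textbf{Gray property.} Shifting preserves Hamming distance, since $\bE$ only permutes coordinates. So within row $j$, the words $\bE^{j\ell}S_i$ and $\bE^{j\ell}S_{i+1}$ differ in one coordinate because $S_i$ and $S_{i+1}$ do.

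The transition from the end of row $j$ to the start of row $j+1$ is from $\bE^{j\ell}S_{r-1}$ to $\bE^{(j+1)\ell}S_0=\bE^{j\ell}(\bE^{\ell}S_0)$. By the hypothesis on $S_{r-1}$ and $\bE^\ell S_0$, these differ in exactly one coordinate.

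The wrap-around from $\bE^{(n-1)\ell}S_{r-1}$ back to $S_0$ is the same case with $j=n-1$, using $\bE^{n\ell}S_0=S_0$.

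\textbf{Single-track property.} Index the codewords by $t=jr+i$. Shifting the whole list by $r$ positions (i.e., $t\mapsto t+r$) sends the codeword at $t$ to $\bE^{\ell}$ of it. Therefore column $c$ of the $P\times n$ array, shifted cyclically by $r$, equals column $c+\ell$ (indices mod $n$).

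Iterating, column $c+k\ell$ is column $c$ shifted by $kr$. Because $\gcd(\ell,n)=1$, the multiples $k\ell$ run over all residues mod $n$. So every column is a cyclic shift of column $0$, which is the single-track property, and the period is $nr$.
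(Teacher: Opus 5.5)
Your proof is correct and complete. The paper gives no proof of this theorem; it is quoted as a known construction from the earlier binary literature. Your argument therefore supplies the standard direct verification: distinctness from pairwise inequivalence, least period $n$ and $\gcd(\ell,n)=1$; the Gray property from the fact that $\bE^{j\ell}$ permutes coordinates; and the single-track property from the observation that advancing $r$ positions in the list applies $\bE^{\ell}$.

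Two small remarks. First, reading ``period $n$'' as least period is not an extra assumption. It is exactly the paper's own definition of the period of a cyclic sequence (the length of the shortest word $X$ with $S=[X,X,\ldots,X]$). As you note, this, together with $\gcd(\ell,n)=1$, is genuinely needed for the $nr$ words to be distinct. Second, your argument never uses that the alphabet is binary. This confirms the paper's later remark that the theorem carries over to $\Z_m$ without modification.
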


The second construction uses the notion of self-dual sequences.
A binary cyclic sequence $S = [s_0 s_1 ~ \cdots ~ s_{k-1}]$ is called a {\bf \emph{self-dual sequence}} (in short, SDS) if it
is equal to its complement, i.e., it is invariant under complement. Such sequences were handled for example in~\cite{Etz87}.

\begin{theorem}
\label{thm:self_dualST}
Let $S_0, S_1,\ldots,S_{r-1}$ be $r$ binary self-dual pairwise inequivalent
sequences of period~$2n$. For each $i$, $0 \leq i \leq r-1$, let $S_i =[ s_i^0 , s_i^1,\ldots, s_i^{2n-1}]$ and define
$$
\bF^j S_i = [ s_i^j, s_i^{j+1} ,\ldots, s_i^{j+n-1} ],
$$
where superscripts are taken modulo $2n$.

If for each $0 \leq i < r-1$, $S_i$ and $S_{i+1}$ differ in exactly
two coordinates, and there also exists an integer $\ell$, where $\gcd(\ell,2n)=1$,
such that  $S_{r-1}$ and $\bE^{\ell} S_0$ differ in exactly two coordinates,
then the following words form a length $n$, period $2nr$ STGC,
\vspace{-0.1cm}
$$
\begin{array}{lclcccl}
\bF^0 S_0 & \hspace{0.15cm} & \bF^0 S_1 & \hspace{0.15cm} & \cdots & \hspace{0.15cm} & \bF^0 S_{r-1} \\
\bF^\ell S_0 & \hspace{0.15cm} & \bF^\ell S_1 & \hspace{0.15cm} & \cdots & \hspace{0.15cm} & \bF^\ell S_{r-1} \\
\bF^{2\ell} S_0 & \hspace{0.15cm} & \bF^{2\ell} S_1 & \hspace{0.15cm} & \cdots & \hspace{0.15cm} & \bF^{2\ell} S_{r-1} \\
\vdots & \hspace{0.15cm} & \vdots  & \hspace{0.15cm} & \vdots & \hspace{0.15cm} & \vdots \\
\bF^{(2n-1)\ell} S_0 & \hspace{0.15cm} & \bF^{(2n-1)\ell} S_1 & \hspace{0.15cm} & \cdots & \hspace{0.15cm} & \bF^{(2n-1)\ell} S_{r-1} \\
\end{array} .
$$
\end{theorem}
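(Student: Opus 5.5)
Three properties of the array in Theorem~\ref{thm:self_dualST} need to be checked. Consecutive words must differ in exactly one coordinate, including the wrap-around. All $2nr$ words must be distinct. Every track must be a cyclic shift of the first track. The rows should be read left to right and top to bottom. The whole argument rests on the self-duality relation $s_i^{j+n}=\bar{s}_i^{\,j}$: for every $j$ it gives $\bF^{j+n} S_i=\overline{\bF^j S_i}$, and the window $\bF^j S_i$ determines $S_i$ up to the shift, namely $S_i\simeq \bE^j[\bF^j S_i,\overline{\bF^j S_i}]$.

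\textbf{Gray property.} Within a row, $S_i$ and $S_{i+1}$ differ in exactly two coordinates of period $2n$. Since complementation commutes with adding a fixed vector, the difference pattern $S_i\oplus S_{i+1}$ is itself invariant under shift by $n$, so the two differing coordinates must be $k$ and $k+n$. Any window of $n$ consecutive coordinates contains exactly one of them, so $\bF^{a\ell}S_i$ and $\bF^{a\ell}S_{i+1}$ differ in exactly one position. The passage from the end of row $a$ to the start of row $a+1$ compares $\bF^{a\ell}S_{r-1}$ with $\bF^{(a+1)\ell}S_0=\bF^{a\ell}(\bE^\ell S_0)$. The same argument applies, because $S_{r-1}$ and $\bE^\ell S_0$ are both self-dual and differ in two coordinates. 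The final wrap from row $2n-1$ back to row $0$ is covered by the same comparison, since $2n\ell\equiv 0 \pmod{2n}$.

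\textbf{Distinctness.} Suppose $\bF^{a\ell}S_i=\bF^{b\ell}S_{i'}$. The reconstruction above gives $S_i\simeq S_{i'}$, so $i=i'$ by pairwise inequivalence. Reconstructing also yields $\bE^{a\ell}S_i=\bE^{b\ell}S_i$, so $(a-b)\ell$ is a multiple of the least period of $S_i$. This is where I expect the main obstacle. The statement implicitly requires each $S_i$ to have least period exactly $2n$, which is what ``sequences of period $2n$'' should mean. Under that reading, $\gcd(\ell,2n)=1$ forces $a\equiv b \pmod{2n}$, so the words are distinct. If some $S_i$ had a smaller period, the construction would genuinely repeat words, so I would state this interpretation explicitly.

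\textbf{Single-track property.} The entry in column $c$ of the word $\bF^{a\ell}S_i$ is $s_i^{a\ell+c}$. Hence track $c$, read down the $2nr$ words, is the sequence whose term at index $ar+i$ is $s_i^{a\ell+c}$. Replacing $c$ by $c+\ell$ gives $s_i^{(a+1)\ell+c}$, which is the term at index $(a+1)r+i$ of track $c$. So track $c+\ell$ equals track $c$ shifted by $r$ positions. Because $\gcd(\ell,2n)=1$, the residues $c+t\ell$, taken modulo $2n$, cover every column $0,\dots,n-1$ of $\bF$. One technicality: when $c+t\ell$ falls in $[n,2n)$, the corresponding column is the complement of column $c+t\ell-n$. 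To settle this, I would use that track $c+n$ is the complement of track $c$ and is also the cyclic shift of track $c$ by $n' r$ positions, where $n'\equiv n\ell^{-1}\pmod{2n}$. Therefore track $c+n-\text{(real index)}$ is a cyclic shift of track $c$ by the appropriate amount, and the single-track property holds for every column.
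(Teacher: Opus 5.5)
The paper states this theorem without proof; it is quoted from the earlier binary single-track literature, so there is no in-paper argument to compare against. Your verification is correct and is the standard one. The shift-by-$n$ invariance of $S_i\oplus S_{i+1}$ forces the two differing coordinates to be $k$ and $k+n$. The reconstruction $\bE^{j}S_i=[\bF^jS_i,\overline{\bF^jS_i}]$ gives distinctness, once "period $2n$" is read as least period, which is exactly the paper's definition of period. The identity $T_{c+\ell}=\bE^{r}T_c$ gives the single-track property.

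Two points to tidy up.

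\textbf{Deriving the self-duality relation.} You use $s_i^{j+n}=\bar{s}_i^{\,j}$ without deriving it. With the paper's definition, self-duality only says $\bar S_i=\bE^{k}S_i$ for some $k$. Then $\bE^{2k}S_i=S_i$, and least period $2n$ forces $k\equiv 0$ or $k\equiv n \pmod{2n}$. Since $k\equiv 0$ is impossible, $k\equiv n$. This is a second place where least period $2n$ is used, not only in the distinctness step.

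\textbf{The "technicality" in the single-track step.} This is not a real issue, and your proposed fix, ending with "track $c+n-$(real index)", is garbled and should be removed. You defined $(T_c)_{ar+i}=s_i^{a\ell+c}$ with superscripts modulo $2n$. This makes $T_c$ well defined for every residue $c$ modulo $2n$. The identity $T_{c+\ell}=\bE^{r}T_c$ then holds for all such $c$, including the wrap at $a=2n-1$, because $2n\ell\equiv 0$. So for any actual column $c'\in\{0,\dots,n-1\}$, pick $t$ with $t\ell\equiv c' \pmod{2n}$, and you get $T_{c'}=\bE^{tr}T_0$ directly. It does not matter that intermediate indices pass through $[n,2n)$: the equalities are between well-defined sequences whether or not they are physical tracks.

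Your side remark is true but unnecessary. Since $\ell$ is odd, $n\ell^{-1}\equiv n \pmod{2n}$, so $n'=n$ and $T_{c+n}=\bE^{nr}T_c=\overline{T_c}$.
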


The goal of the current work is to construct non-binary STGCs of length $m^t$ and period $m^{m^t}$ over an alphabet of size $m$.
It was proved in~\cite{ScEt99} that there is no such STGCs over $\F_p$, $p$ prime, if the difference between two consecutive
codewords on the different coordinate is always the same constant of $\F_p$. However, when this difference between consecutive
codewords might differ, such an STGC might exist, as will be proved in the current work.

The rest of the paper is organized as follows.
Section~\ref{sec:SDS} presents the non-binary self-dual sequences which will be the building for the non-binary STGCs.
In particular, a recursive construction for these sequences will be presented.
The second construction of binary STGCs is adapted for non-binary STGCs in Section~\ref{sec:STGC_basics}, and implementation of
the construction with SDSs for an alphabet of size $4$ is demonstrated.
The construction of STGCs is implemented later on alphabets whose size
is a prime. A recursive construction for these codes of length $p^t$ and period $p^{p^t}$,
where $p$ is an odd prime and $t \geq 3$ is provided. The construction works if there
exists an appropriate arrangement of all the SDSs of period $p^2$ (or period $27$ for $p=3$).
Such an arrangement is provided for $p=3$ and $p=5$.
Conclusions and directions for future research are provided in Section~\ref{sec:conclude}.

\section{Non-binary Self-Dual Sequences}
\label{sec:SDS}

The building blocks for our recursive construction of STGCs are SDSs. Three different constructions to generate such
sequences will be discussed in the paper: two recursive ones and one direct construction.
But, two of these construction will be only used to find seeds for the recursive construction.
Therefore, in this section only one recursive construction will be discussed. It will be discussed
only for alphabet whose size is an odd prime. It will be mainly demonstrated for alphabets of size three and
five before referring to an arbitrary alphabet of an odd prime size.

We start with some basic definitions on sequences.
A cyclic sequence $S = [s_0,s_1,\ldots,s_{k-1}]$ has {\bf \emph{length}} $k$.
A word (acyclic sequence) $X=(x_0,x_1,\cdots,x_{k-1})$ has also length $k$. The {\bf \emph{period}} of a cyclic sequence $S$
is the length of the shortest word $X$ such that $S=[X,X,\ldots,X]$. The period of acyclic sequences is not considered.

A very important operator with whom many of the properties of sequences can
be represented is the shift operator $\bE$. This operator shifts the whole sequence to the left.
For a word $(s_0 ,s_1 ,~ \cdots ~, s_{k-1})$ of length $k$,
the shift operator $\bE s_i = s_{i+1}$, where $0 \leq i < k-1$ and
for a cyclic sequence $S = [s_0 s_1 ~ \cdots ~ s_{k-1}]$, $\bE s_i = s_{i+1}$, where $0 \leq i < k-1$, and $\bE s_{k-1} = s_0$.
When the operator is applied on the whole cyclic sequence $S$ we have that
$$
\bE [s_0 s_1 ~ \cdots ~ s_{k-1}] = [s_1 ~ \cdots ~ s_{k-1} s_0] ~.
$$
When the operator is applied to an acyclic sequence $S$ we have that
$$
\bE (s_0 s_1 ~ \cdots ~ s_{k-1}) = (s_1 ~ \cdots ~ s_{k-1}) ~.
$$
We note that when $p$ is a prime we have that
$$
(\bE - \bfone)^{p^k} = \bE^{p^k} -1, ~~ k \geq 0 .
$$

In the sequel we denote by $\bfzero$ a vector of \emph{zeroes}, by $\bfone$ a vector of \emph{ones}, by $\bftwo$ a vector of \emph{twos}, etc.
However, $\bfone$ will also denote the identity operator which leaves a sequence unchanged. By this notation we have $\bE^0 =\bfone$.
It should be understood from the context which $\bfone$ is used.

The definition for a binary SDS is very simple. It is slightly more complicated for non-binary sequences,
where we use the following definition.

\begin{definition}
\label{def:SDSZm}
A non-binary cyclic sequence $S$ over $\Z_m$ is an SDS if $\bfone +S$ is equivalent to $S$.
\end{definition}

Definition~\ref{def:SDSZm} could be expanded for sequences invariant under addition of another constant different from \emph{one}.
Such a definition is discussed in~\cite{Etz26,Etz27}, but for construction of non-binary STGCs Definition~\ref{def:SDSZm}
is sufficient.

\begin{lemma}
\label{lem:periodSDS}
An SDS of length $p^n$ over $\F_p$, $p$ and prime, has period $p^n$.
\end{lemma}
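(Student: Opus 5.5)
Let $S=[s_0,s_1,\ldots,s_{p^n-1}]$ be an SDS of length $p^n$ over $\F_p$, and let $d$ be its period. The plan is to show that $d$ is a multiple of $p^n$. Since $d$ divides $p^n$, this forces $d=p^n$. Two facts give this: the period divides the length, and the self-dual shift has order exactly $p$ modulo the period.

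First, $d$ divides $p^n$. The set of shifts $j$ (modulo $p^n$) with $\bE^j S=S$ is a subgroup of $\Z_{p^n}$, and $d$ is its smallest positive element. Hence $d \mid p^n$, so $d=p^k$ for some $0\le k\le n$. The sequence is then determined by the word $X=(s_0,\ldots,s_{d-1})$, and shifts of $S$ are taken modulo $d$.

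Second, we exploit self-duality. By Definition~\ref{def:SDSZm}, there is an integer $j$ with $\bE^j S=\bfone+S$, that is, $s_{i+j}=s_i+1$ for all $i$. Iterating gives $\bE^{aj}S=\bfa+S$, where $\bfa$ denotes the constant vector with entries $a \in \F_p$. This equals $S$ exactly when $a\equiv 0 \pmod p$. Therefore $aj\equiv 0 \pmod d$ holds if and only if $p\mid a$. In particular $pj\equiv 0\pmod d$ but $j\not\equiv 0 \pmod d$. So $j$ has order exactly $p$ in $\Z_d$, which gives $p\mid d$. This argument by itself only yields $d\ge p$. The real work is to show that $d$ cannot be a proper power $p^k$ with $k<n$; I expect this to be the main obstacle.

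To handle it, I would reduce to the fundamental word. If $d=p^k$, then $S$ is also an SDS of length $p^k$ with the same $j$ taken modulo $p^k$. Note that $\bE^d$ is the identity on $S$ viewed as an infinite periodic sequence, while $\bE^j$ adds $\bfone$. Since $pj\equiv 0$ and $j\not\equiv0 \pmod d$ with $d=p^k$, we get $j\equiv c\,p^{k-1}\pmod{p^k}$ with $p\nmid c$. Consequently $S$ is a concatenation of $p$ blocks of length $p^{k-1}$, each obtained from the previous by adding a constant.

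At this point I am unsure whether the stated lemma is literally true: $[0,1,\ldots,p-1]$ repeated $p^{n-1}$ times has length $p^n$, satisfies $\bfone+S\simeq S$, and has period $p$. So the lemma presumably intends a convention of the paper, for instance that "length" of a cyclic sequence means its least period, or that the SDS is one of the sequences produced by the construction. My plan is therefore to first pin down the intended hypothesis. If length is meant as the minimal period, the proof is immediate from the divisibility argument above. Otherwise I would add the hypothesis that the sum $\sum_i s_i$ over one block, or some sequence-specific weight invariant, is nonzero. I would then argue via $(\bE-\bfone)^{p^k}=\bE^{p^k}-1$: if $d=p^k<p^n$, then $(\bE-\bfone)^{p^k}S=\bfzero$, contradicting that hypothesis.
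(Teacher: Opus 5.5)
Your counterexample is valid: $S=[0,1,\ldots,p-1]$ repeated $p^{n-1}$ times satisfies $\bfone+S=\bE S$, so it is an SDS under Definition~\ref{def:SDSZm}, yet its period is $p<p^n$ when $n\ge 2$. The paper's proof avoids this only because its first line writes the SDS as $S=[X,X+\bfone,\ldots,X+(p-1)\cdot\bfone]$ with $X$ of length $p^{n-1}$. That is, it tacitly assumes $\bE^{p^{n-1}}S=\bfone+S$. This is the form in which SDSs are counted and generated throughout the paper, and your example does not have it.

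That hypothesis is the idea your proposal is missing. With it there is no ``main obstacle'' at all. The period $d$ divides $p^n$, so $d<p^n$ forces $d\mid p^{n-1}$. Hence $\bE^{p^{n-1}}S=S$, i.e., $X+\bfone=X$, which is absurd. In your notation: if the self-dual shift satisfies $p^{n-1}\mid j$, then every $d=p^k$ with $k<n$ divides $j$, contradicting your own observation that $j\not\equiv 0 \pmod d$.

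The repairs you propose instead do not work. Reading ``length'' as least period turns the lemma into a tautology.

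A nonzero-sum hypothesis uses the wrong invariant. Let $\sigma(X)$ denote the sum of the entries of $X$. For odd $p$, the sum of all $p^n$ entries of $[X,X+\bfone,\ldots,X+(p-1)\cdot\bfone]$ is $p\,\sigma(X)+p^{n-1}\binom{p}{2}\equiv 0 \pmod p$. So that hypothesis excludes every SDS the paper uses. The sum over a window of length $p^{n-1}$ is not an invariant either: under $\bE^{p^{n-1}}S=\bfone+S$ it increases by $1$ with each shift.

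Moreover, that window sum equals $(\bE-\bfone)^{p^{n-1}-1}S$, while $d=p^k$ only gives $(\bE-\bfone)^{m}S=\bfzero$ for $m\ge p^k$. So your argument via $(\bE-\bfone)^{p^k}=\bE^{p^k}-\bfone$ excludes $k\le n-2$ but says nothing about the decisive case $d=p^{n-1}$.
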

\begin{proof}
Let $S=[X,X +\bfone ,X +\bftwo,\ldots,X +(p-1) \cdot \bfone]$ be an SDS of length $p^n$, where $X$ is a word of length $p^{n-1}$.
Since powers of $p$ are the only divisors of $p^n$, it follows that the period of $S$ is a divisor of $p^n$.
If $S$ has a period less than $p^n$, then $X+\bfone = X$, a contradiction.
\end{proof}

\begin{lemma}
\label{lem:enumerateSDS}
The number of inequivalent SDSs of period $p^n$, over $\Z_p$, $p$ prime, is $p^{p^{n-1} -n}$.
\end{lemma}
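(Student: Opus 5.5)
We count the SDSs of length $p^n$ as acyclic words first, and then pass to equivalence classes. The passage to classes will use Lemma~\ref{lem:periodSDS}: every SDS of length $p^n$ has full period $p^n$.

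\textbf{Step 1: the shape of an SDS.} Let $S=[s_0,\ldots,s_{p^n-1}]$ be an SDS. By Definition~\ref{def:SDSZm} there is some $j$ with $\bE^j S = \bfone + S$. Applying this $p$ times gives $\bE^{pj}S = S$. Since $S$ has period $p^n$, we get $p^n \mid pj$, so $j$ is a multiple of $p^{n-1}$. Write $j = a p^{n-1}$ with $a \in \{1,\ldots,p-1\}$; the case $a=0$ is excluded because $S \neq \bfone + S$.

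Let $a'$ be the inverse of $a$ modulo $p$. Applying the relation $a'$ times gives $\bE^{p^{n-1}} S = a'\cdot\bfone + S$. Hence $S=[X, X+c\bfone, X+2c\bfone,\ldots]$ with $c=a'\neq 0$, and $X$ an arbitrary word of length $p^{n-1}$.

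\textbf{Step 2: counting words.} I first check the converse: every sequence of this form is an SDS. Indeed, $\bE^{a p^{n-1}}$ adds $ac\cdot\bfone = \bfone$. The word $X$ and the value $c$ determine $S$, and $c$ can be recovered from $S$ as $s_{p^{n-1}}-s_0$. So the number of acyclic words of length $p^n$ that are SDSs is $(p-1)p^{p^{n-1}}$.

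\textbf{Step 3: passing to equivalence classes.} The SDS property is preserved by cyclic shifts. By Lemma~\ref{lem:periodSDS}, each class contains exactly $p^n$ distinct words. Dividing gives $(p-1)p^{p^{n-1}-n}$ classes.

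\textbf{Step 4: reconciling with the stated count.} The stated count is $p^{p^{n-1}-n}$, so it must be counting only the SDSs with $c=1$, that is, $S=[X,X+\bfone,\ldots,X+(p-1)\bfone]$. This is the form used in the proof of Lemma~\ref{lem:periodSDS}. The value $c$ is a shift invariant, because $s_{i+p^{n-1}}-s_i=c$ for all $i$. So the classes split evenly by $c$, and for $c=1$ I obtain $p^{p^{n-1}}/p^n = p^{p^{n-1}-n}$.

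The main obstacle is exactly this discrepancy. Definition~\ref{def:SDSZm}, read literally, allows all $c\neq 0$ and gives $p-1$ times the stated number. I would therefore state explicitly that the count refers to sequences with $\bE^{p^{n-1}}S=\bfone+S$, matching the convention in the proof of Lemma~\ref{lem:periodSDS}. Under that convention, Steps 2 and 3 give $p^{p^{n-1}-n}$.
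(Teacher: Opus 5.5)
Your proof is correct and is essentially the paper's argument. The paper counts the $p^{p^{n-1}}$ words $X$ of length $p^{n-1}$, each generating $[X,X+\bfone,\ldots,X+(p-1)\cdot\bfone]$, and notes that each SDS is generated by exactly its $p^n$ windows; this matches your division of the $c=1$ words by the orbit size $p^n$ from Lemma~\ref{lem:periodSDS}. Your Step~4 observation is also right: the paper's proof, like that of Lemma~\ref{lem:periodSDS}, silently uses the normalization $\bE^{p^{n-1}}S=\bfone+S$, and under the literal Definition~\ref{def:SDSZm} the count would be $(p-1)p^{p^{n-1}-n}$ (e.g.\ $[000222111]$ is an SDS over $\Z_3$ inequivalent to all of $[000111222]$, $[010121202]$, $[110221002]$).
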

\begin{proof}
Each word $X$ of length $p^{n-1}$ defines an SDS $S=[X,X +\bfone ,X +\bftwo,\ldots,X +(p-1) \cdot \bfone]$ and each
word of length $p^{n-1}$ on $S$ defines the same sequence $S$. This implies that two such SDSs of period $p^n$
yield two disjoint sets of $p^n$ words of length $p^{n-1}$. Hence, the total number of
inequivalent SDSs of period $p^n$, over $\Z_p$, is $p^{p^{n-1}}/p^n=p^{p^{n-1} -n}$.
\end{proof}

We continue by demonstrating the ideas for $p=3$.
The correctness of the following lemmas is observed from the definitions.
\begin{lemma}
\label{lem:simpleR}
Let $[X ~~ X+\bfone ~~ X+\bftwo]$ be an SDS of period $3^n$ over $\Z_3$, where $X$ is a word of length $3^{n-1}$.
If $Z$ and $Y$ are two words of length $3^{n-1}$, then the sequence $[V , V+\bfone , V+\bftwo]$, where $V=(Z , Z+Y , Z+2Y+X)$,
is an SDS of period $3^{n+1}$.
\end{lemma}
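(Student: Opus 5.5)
The claim has two parts: the sequence is an SDS, and it has period $3^{n+1}$. By Lemma~\ref{lem:periodSDS}, an SDS of length $3^{n+1}$ over $\Z_3$ automatically has period $3^{n+1}$. So the whole task is to show that $T=[V ~ V+\bfone ~ V+\bftwo]$ satisfies Definition~\ref{def:SDSZm}, that is, $\bfone+T\simeq T$.

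This is almost immediate from the block form of $T$. Adding $\bfone$ coordinatewise gives
$$\bfone+T=[V+\bfone ~ V+\bftwo ~ V],$$
since $\bftwo+\bfone=\bfzero$ over $\Z_3$. This is exactly $\bE^{3^{n}}T$, because $V$ has length $3\cdot 3^{n-1}=3^n$. Hence $\bfone+T\simeq T$, and $T$ is an SDS.

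The real content, and the step I would check carefully, is the length bookkeeping. Here $Z$, $Y$ and $X$ all have length $3^{n-1}$, so $V=(Z,Z+Y,Z+2Y+X)$ has length $3^n$ and $T$ has length $3^{n+1}$. Lemma~\ref{lem:periodSDS} then gives period exactly $3^{n+1}$. I would note that the argument never uses that $[X ~ X+\bfone ~ X+\bftwo]$ is an SDS; that hypothesis, and the particular form of $V$, presumably matter later (for instance for distinctness or Gray-code adjacency, or via the identity $(\bE-\bfone)^{3^k}=\bE^{3^k}-1$). They are not needed for the statement as given.

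There is no genuine obstacle. The only subtle point is not to overreach by trying to exploit the structure of $V$, and to invoke Lemma~\ref{lem:periodSDS} for the period rather than arguing it directly. If one wanted to avoid that lemma, one could argue directly: a proper period must divide $3^n$, and any divisor of $3^{n+1}$ below $3^{n+1}$ divides $3^n$. A period dividing $3^n$ would force $V+\bfone=V$, which is impossible.
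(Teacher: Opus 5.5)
Your proof is correct and takes the same route as the paper, which simply cites Definition~\ref{def:SDSZm} for self-duality and Lemma~\ref{lem:periodSDS} for the period. You additionally spell out the identity $\bfone+T=\bE^{3^n}T$ that the paper leaves implicit, and you correctly note that the SDS hypothesis on $[X ~ X+\bfone ~ X+\bftwo]$ is not needed here.
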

\begin{proof}
By Definition~\ref{def:SDSZm} the sequence $[V , V+\bfone , V+\bftwo]$ is an SDS and by Lemma~\ref{lem:periodSDS}
its period is~$3^{n+1}$.
\end{proof}

Let $[X ~~ X+\bfone ~~ X+\bftwo]$ be an SDS of period $3^n$ over $\Z_3$, where $X$ is a word of length $3^{n-1}$ over $\Z_3$.
Let $Z$ be a word of length $3^{n-1}$, that starts with a \emph{zero}, over $\Z_3$, and let $Y$ be a word of length $3^{n-1}$ over $\Z_3$.
\begin{lemma}
\label{lem:SDS_Z3}
The sequence $[V , V+\bfone , V+\bftwo]$, where $V=(Z , Z+Y , Z+2Y+X)$,
is an SDS of period $3^{n+1}$ over $\Z_3$. Each different choice of $Z$ and $Y$ yields a distinct SDS.
Each SDS of period $3^{n+1}$ over $\Z_3$ can be constructed in this way.
\end{lemma}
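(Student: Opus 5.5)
The plan is to prove the three assertions of the lemma separately. The SDS assertion is already Lemma~\ref{lem:simpleR}. For the other two, I will use the fact that $X$ can be recovered from $V$ by a shift-invariant linear operator. For $V=(A,B,C)$ with $A,B,C$ of length $3^{n-1}$, the construction reads $A=Z$, $B=Z+Y$, $C=Z+2Y+X$. Solving, $Z=A$, $Y=B-A$ and $X=C-2B+A=A+B+C$ over $\Z_3$. Hence the pair $(Z,Y)$ and the word $X$ are determined by $V$, and conversely.

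\textbf{Key identity.} Over $\Z_3$ we have $\bfone+\bE^{m}+\bE^{2m}=(\bE^{m}-\bfone)^2$, and by the remark $(\bE-\bfone)^{p^k}=\bE^{p^k}-\bfone$ this equals $(\bE-\bfone)^{2\cdot 3^{n-1}}$ when $m=3^{n-1}$. So, for the SDS $T=[V,V+\bfone,V+\bftwo]$, the cyclic sequence $D(T)=(\bfone+\bE^{3^{n-1}}+\bE^{2\cdot 3^{n-1}})T$ has its first $3^{n-1}$ entries equal to $A+B+C=X$. Since $T$ is self-dual, $D(T)$ is $[X,X+\bfone,X+\bftwo]$ repeated three times. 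The operator $D$ commutes with $\bE$, so $D(\bE^kT)=\bE^kD(T)$.

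\textbf{Distinctness.} Suppose $(Z,Y)$ and $(Z',Y')$, with the same $X$, give equivalent sequences $T$ and $T'=\bE^kT$. Then $D(T')=\bE^kD(T)$. Both $D(T)$ and $D(T')$ begin with the same word $X$. I will argue that this forces $\bE^k[X,X+\bfone,X+\bftwo]=[X,X+\bfone,X+\bftwo]$, and then Lemma~\ref{lem:periodSDS} (period $3^n$) gives $k\equiv 0 \pmod{3^n}$.

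\emph{Main obstacle.} This is the step I expect to be delicate. Two sequences both beginning with $X$ need not coincide after a shift. The way around it is to fix $X$ as a specific word and regard the family as indexed by $(X,Z,Y)$, with $X$ a fixed representative word. The equality of the two derived sequences, both equal to $[X,X+\bfone,X+\bftwo]$ as cyclic sequences read from position $0$, is then exactly what the construction gives, and the period argument applies.

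\textbf{Finishing distinctness.} Once $k=c\cdot 3^n$, we get $\bE^kT=[V+\mathbf{c},\ldots]$, so $V'=V+\mathbf{c}$. Both $Z$ and $Z'$ start with zero, which forces $c=0$, hence $V'=V$ and $(Z',Y')=(Z,Y)$.

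\textbf{Surjectivity.} Take any SDS $T$ of period $3^{n+1}$. Its derived sequence $D(T)$ is an SDS of period $3^n$ of the form $[X,X+\bfone,X+\bftwo]$ up to a shift. First replace $T$ by a shift $\bE^kT$ so that $D(\bE^kT)$ starts exactly with the word $X$. Then shift further by a multiple $c\cdot3^n$. This leaves $D$ unchanged, since $D(T)$ has period $3^n$ and adding $\mathbf{c}$ to each of $A,B,C$ adds $3\mathbf{c}=\bfzero$. It also replaces $V$ by $V+\mathbf{c}$, so we may take $c$ with the first entry of $V$ equal to zero. Now read off $Z=A$ and $Y=B-A$. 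By construction $A+B+C=X$, so $C=Z+2Y+X$ and $T$ arises from the lemma.

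\textbf{Sanity check.} Counting supports this. Running over the $3^{3^{n-1}-n}$ SDSs of period $3^n$, each with its $3^n$ representative words $X$, and over the $3^{3^{n-1}-1}\cdot 3^{3^{n-1}}$ pairs $(Z,Y)$, and dividing by the $3^n$ choices of $X$, gives $3^{3^n-n-1}$. This matches Lemma~\ref{lem:enumerateSDS} for period $3^{n+1}$.
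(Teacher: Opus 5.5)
Your proof is correct, and its core matches the paper's. Your operator $\bfone+\bE^{3^{n-1}}+\bE^{2\cdot 3^{n-1}}$ is the paper's $(\bE^{3^{n-1}}-\bfone)^2$, evaluated in one step as the block sum $A+B+C=X$ rather than in two. Your distinctness argument is also the paper's: invariance of the threefold repetition of $[X,X+\bfone,X+\bftwo]$ under $\bE^k$ forces $3^n\mid k$, and then the leading zero of $Z$ forces $k\equiv 0$. The ``main obstacle'' you flag is not really one. $D(T)$ is determined position by position by the word $X$, so $D(T)=D(T')$ literally, and $\bE^kD(T)=D(T)$ follows immediately.

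The genuine difference is surjectivity. The paper first proves injectivity over all triples, including that inequivalent SDSs of period $3^n$ give inequivalent images, and then concludes by counting against Lemma~\ref{lem:enumerateSDS}. You instead invert the construction: you replace an arbitrary $T$ by a shift whose image under $D$ begins with the chosen representative $X$, shift further by a multiple of $3^n$ so that $V$ begins with zero, and read off $Z=A$, $Y=B-A$.

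Your route has several advantages. It is constructive, and it needs neither the enumeration lemma nor the cross-$X$ injectivity. It also avoids the exponent bookkeeping, where the paper's text has slips: the number of SDSs of period $3^n$ is $3^{3^{n-1}-n}$, and the total is $3^{3^n-n-1}$, exactly as in your sanity check. Finally, it carries over directly to prime $p$. Over $\F_p$ one has $(\bE^m-\bfone)^{p-1}=\sum_{i=0}^{p-1}\bE^{im}$, and $\sum_k L_k=X$ because $\binom{p}{i+1}\equiv 0$ for $1\le i\le p-2$. The counting route is shorter once Lemma~\ref{lem:enumerateSDS} is available, at the price of requiring the stronger injectivity.

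One caveat applies equally to the paper. You should state explicitly that $T$ is written as $[V,V+\bfone,V+\bftwo]$, which is the convention underlying Lemmas~\ref{lem:periodSDS} and~\ref{lem:enumerateSDS}. Read literally, Definition~\ref{def:SDSZm} also admits sequences of the form $[V,V+\bftwo,V+\bfone]$, which the construction never produces.
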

\begin{proof}
By Definition~\ref{def:SDSZm},
a sequence of the form $[V,V+\bfone,V+\bftwo]$, where  $V=(Z , Z+Y , Z+2Y+X)$ and $Z$ is of length $3^{n-1}$, is an SDS
and hence each SDS $[X , X+\bfone , X+\bftwo]$ can be used for such a construction.
Let $S(Z,Y,X)$ denote the sequence
$$
[ Z , Z+Y , Z+2Y+X, Z+\bfone , Z+Y+\bfone , Z+2Y+X+\bfone,Z+\bftwo , Z+Y+\bftwo , Z+2Y+X+\bftwo ]
$$


It is required to show that each SDS of period $3^{n+1}$ is constructed exactly once in this way,
with only one $n$-tuple is taken as $X$ for each SDS of period $3^n$.
First we apply $(\bE^{3^{n-1}}-1)^2= \bE^{2 \cdot 3^{n-1}}-2 \cdot \bE^{3^{n-1}} +1$ on
$[V,V+\bfone,V+\bftwo]$, where  $V=(Z , Z+Y , Z+Y+X)$ and $Z$ is of length $3^{n-1}$. This is done in two steps.
$$
(\bE^{3^{n-1}} -1) [ Z , Z+Y , Z+2Y+X, Z+\bfone , Z+Y+\bfone , Z+2Y+X+\bfone,Z+\bftwo , Z+Y+\bftwo , Z+2Y+X+\bftwo ]
$$
$$
=[Y ,Y+X, Y+2X+\bfone,Y,Y+X, Y+2X+\bfone,Y,Y+X, Y+2X+\bfone].
$$
We continue and observe that
$$
(\bE^{3^{n-1}} -1)[Y ,Y+X, Y+2X+\bfone]=[X , X+\bfone , X+\bftwo].
$$
This implies that for two distinct SDSs of period $3^n$, $[X_1 , X_1+\bfone , X_1+\bftwo]$ and $[X_2 , X_2+\bfone , X_2+\bftwo]$,
$S(Z_1,Y_1,X_1)$ and $S(Z_2,Y_2,X_2)$ are distinct SDSs of period $3^{n+1}$,
where $Z_1,Y_1,Z_2,Y_2$ are four words of length $3^{n-1}$ and furthermore, $Z_1$ and $Z_2$ start with a \emph{zero}.
Moreover, if $Z_1 \neq Z_2$ or $Y_1 \neq Y_2$, then $S(Z_1,Y_1,X)$ is not equivalent to $S(Z_2,Y_2,X)$ as if they were then
$$
[ Z_1 , Z_1+Y_1 , Z_1+2Y_1+X, Z_1+\bfone , Z_1+Y_1+\bfone , Z_1+2Y_1+X+\bfone,Z_1+\bftwo , Z_1+Y_1+\bftwo , Z_1+2Y_1+X+\bftwo ]
$$
$$
= \bE^k [ Z_2 , Z_2+Y_2 , Z_2+2Y_2+X, Z_2+\bfone , Z_2+Y_2+\bfone , Z_2+2Y_2+X+\bfone,Z_2+\bftwo , Z_2+Y_2+\bftwo , Z_2+2Y_2+X+\bftwo ],
$$
where $k \notin \{3^n , 2\cdot 3^n \}$. Applying $(\bE^{3^{n-1}}-1)^2$ on these two sequence implies that
$$
[X , X+\bfone , X+\bftwo] = \bE^\kappa [X , X+\bfone , X+\bftwo],
$$
where $\kappa \equiv k~(\mmod~ 3^n)$. Therefore, the SDS $[X , X+\bfone , X+\bftwo]$ is periodic, a contradiction.

After we have proved that this method yields distinct SDSs of period $3^{n+1}$,
the last step is to prove that all the SDSs of period $3^{n+1}$ were constructed
The proof will be given by a simple enumeration.
The number of possible $Y$'s is $3^{3^{n-1}}$ and the number of
possible $Z$'s is $3^{3^{n-1}-1}$. This implies that from one SDS of period $3^n$ we construct $3^{3^{n-1}-1} \cdot 3^{3^{n-1}}=3^{2 \cdot 3^{n-1}-1}$
SDSs of period $3^{n+1}$ and since the number of SDSs of period $3^n$ is $3^{3^n -n}$, it follows that in total
$3^{3^n -n} \cdot 3^{2 \cdot 3^{n-1}-1}= 3^{3^{n+1} -n-1}$ distinct SDSs of period $3^{n+1}$ were constructed.
Since by Lemma~\ref{lem:enumerateSDS} this is the number of distinct SDSs of period $3^{n+1}$ the proof is completed.
\end{proof}

The same technique and expressions presented in Lemma~\ref{lem:SDS_Z3} for $\Z_3$, can be applied for
any alphabet of a prime size $p$. For example in $\Z_5$, we start with a
SDS $S=[X , X+\bfone , X+\bftwo , X+\bfthree , X+\bffour]$, where $X$ is a word of length $5^{n-1}$.
Let $Y_1$, $Y_2$, and $Y_3$ be words of length $5^{n-1}$ over~$\Z_5$ and let $Z$ be a word of length $5^{n-1}$ that starts with a \emph{zero}.
From $S$, $Y_1$, $Y_2$. $Y_3$, and $Z$, we form the SDS
$[V , V+\bfone , V+\bftwo , V+\bfthree , V+\bffour]$, where
\begin{equation}
\label{eq:SD_binom}
V = (Z , Z+Y_1 , Z+2Y_1 + Y_2 , Z + 3Y_1 +3Y_2 + Y_3, Z+ 4Y_1 +Y_2 +4 Y_3 +X)
\end{equation}
is a word of length $5^n$ over $\Z_5$. On this word we apply $(\bE^{3^{n-1}}-1)^4$ and obtain
the SDS $S=[X , X+\bfone , X+\bftwo , X+\bfthree , X+\bffour]$ and complete as in the proof of Lemma~\ref{lem:SDS_Z3}.
This is further generalized in the following general theorem.

\begin{theorem}
Let $[X , X+\bfone , X+ \bftwo, \ldots, X + \bfone (p-1)]$ be an SDS of period $p^n$ over $\Z_p$, $p$ prime.
Let $Z$ be a word of length $p^{n-1}$ that starts with a \emph{zero} over $\Z_p$ and let
$Y_1, Y_2,\ldots, Y_{p-2}$ be words of length $p^{n-1}$, over $\Z_p$.
\begin{itemize}
\item The sequence $[V , V+\bfone , V+ \bftwo, \ldots, V + \bfone (p-1)]$, where
$$
V=(L_0=Z, L_1=Z+Y_1,L_2 =Z+2Y_1 + Y_2,  \ldots, L_{p-1}),
$$
$$
L_k =Z +\sum_{i=1}^k a_i Y_i, ~~ a_i \equiv \binom{k}{i} ~(\mmod~p), ~~ 1 \leq k \leq p-2
$$
$$
L_{p-1} =Z +\sum_{i=1}^{p-2} a_i Y_i +X, ~~ a_i \equiv \binom{p-1}{i} ~(\mmod~p),
$$
is an SDS of period $p^{n+1}$.

\item Each different choice of $Z$, $Y_1,Y_2,\ldots,Y_{p-1}$ yields a different SDS.

\item Each SDS of period $p^{n+1}$ over $\Z_p$ is constructed in this way.
\end{itemize}
\end{theorem}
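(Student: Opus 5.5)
The plan is to generalize the proof of Lemma~\ref{lem:SDS_Z3}, using the operator $D=\bE^{p^{n-1}}-\bfone$ and the identity $D^{p}=\bE^{p^n}-\bfone$. Write $q=p^{n-1}$ and regard $S(Z,Y_1,\ldots,Y_{p-2},X)=[V,V+\bfone,\ldots,V+(p-1)\bfone]$ as $p^2$ consecutive blocks of length $q$. Block number $jp+k$, for $0\le j,k\le p-1$, is $L_k+j\cdot\bfone$.

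\textbf{Step 1: the sequence is an SDS of period $p^{n+1}$.} Adding $\bfone$ to $S$ equals $\bE^{p^n}S$. Hence $S$ is an SDS by Definition~\ref{def:SDSZm}, and Lemma~\ref{lem:periodSDS} gives its period $p^{n+1}$.

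\textbf{Step 2: the key computation.} I claim that $D^{p-1}S$ is the SDS $[X,X+\bfone,\ldots,X+(p-1)\bfone]$, repeated $p$ times. To see this, put $Y_0=Z$ and $Y_{p-1}=X$. Within one copy of $V$ we then have $L_k=\sum_{i=0}^{k}\binom{k}{i}Y_i$ for every $k\le p-1$. On the $L_{p-1}$ term this is exactly the stated formula, since $\binom{p-1}{p-1}=1$. So $V$ has the form $((\bfone+\Delta)^kY_0)_k$ in a formal difference calculus. Applying $D$ lowers the index by Pascal's rule, $L_{k+1}-L_k=\sum_i\binom{k}{i}Y_{i+1}$.

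The wrap-around must be tracked. The block after $L_{p-1}+j\bfone$ is $L_0+(j+1)\bfone$, and its difference is $Z+\bfone-L_{p-1}$. Using $\binom{p-1}{i}\equiv(-1)^i \pmod p$, I will check that the wrap-around differences behave like index $p$ with the extra $\bfone$. One clean way is to verify directly that $D^{p-1}$ applied at block position $k$ gives $\sum_{r}(-1)^{p-1-r}\binom{p-1}{r}(\text{block }k+r)$. This collapses to $\sum_r (\text{block }k+r)$, because $(-1)^{p-1-r}\binom{p-1}{r}\equiv 1$.

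Summing $p$ consecutive blocks is easy. Within a single copy it gives $\sum_k L_k=\sum_i Y_i\sum_{k\ge i}\binom{k}{i}=\sum_i\binom{p}{i+1}Y_i\equiv Y_{p-1}=X$ by the hockey-stick identity. For a window that straddles copies, $j$ increments $\bfone$ appear once per straddled wrap. This produces $X+c\cdot\bfone$, with $c$ increasing by one per shift of $p$ blocks, which matches $[X,X+\bfone,\ldots]$ up to the appropriate alignment.

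I expect this wrap-around bookkeeping to be the main obstacle. I would first verify it against the explicit $p=3$ computation in the proof of Lemma~\ref{lem:SDS_Z3}, and then against equation~(\ref{eq:SD_binom}) for $p=5$.

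\textbf{Step 3: distinctness.} Suppose $S(Z,\vec Y,X)\simeq S(Z',\vec Y',X')$. Applying $D^{p-1}$, which commutes with $\bE$, shows that the two period-$p^n$ SDSs generated by $X$ and $X'$ are equivalent. Since we fix one representative $X$ per equivalence class, $X=X'$, and the shift $k$ must satisfy $\kappa\equiv0\pmod{p^n}$; otherwise the SDS would be periodic, contradicting Lemma~\ref{lem:periodSDS}. So $k$ is a multiple of $p^n$. Such a shift adds a constant $c\bfone$ to $V$. Because $Z$ and $Z'$ both start with zero, $c=0$, and the first blocks give $Z=Z'$. Then comparing successive blocks recursively, which is a triangular system with unit diagonal coefficients $\binom{k}{k}=1$, gives $Y_i=Y'_i$.

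\textbf{Step 4: completeness by counting.} There are $p^{q-1}$ choices of $Z$ and $p^{(p-2)q}$ choices of $(Y_1,\ldots,Y_{p-2})$. By Lemma~\ref{lem:enumerateSDS} there are $p^{p^{n-1}\cdot p-n}=p^{p^n-n}$ choices of $X$. The total is $p^{(p-1)p^{n-1}-1+p^n-n}$. This is short of $p^{p^n-n-1}\cdot$ the required count unless the exponents match. Checking: the required number is $p^{p^n-(n+1)}$. I will recompute the count with the lengths in the statement; the lengths should be $|Z|=|Y_i|=p^{n-1}$ and the target is $p^{p^{n}}$-type. If the exponents agree, as they do for $p=3$, where $3^{n-1}$ blocks of length $3^{n-1}$ give $3^{n+1}$ correctly, then Lemma~\ref{lem:enumerateSDS} closes the proof as in Lemma~\ref{lem:SDS_Z3}.
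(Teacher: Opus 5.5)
Your route is the paper's: apply $(\bE^{p^{n-1}}-\bfone)^{p-1}$ to recover the period-$p^n$ SDS, use its aperiodicity (Lemma~\ref{lem:periodSDS}) for distinctness, and finish by counting with Lemma~\ref{lem:enumerateSDS}. Your identity $(\bE^{q}-\bfone)^{p-1}=\sum_{r=0}^{p-1}\bE^{rq}$ over $\F_p$, together with the hockey-stick evaluation $\sum_{k}L_k=\sum_i\binom{p}{i+1}Y_i=X$, is a clean way to do the computation that the paper only asserts. Your treatment of shifts $k\equiv 0 \pmod{p^n}$ via the leading zero of $Z$ is also sound.

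The wrap-around bookkeeping in Step~2 is misstated. The window of $p$ consecutive blocks starting at block $jp+k$ consists of $L_k+j\bfone,\ldots,L_{p-1}+j\bfone$ and $L_0+(j+1)\bfone,\ldots,L_{k-1}+(j+1)\bfone$. Its sum is therefore $\sum_{k'}L_{k'}+(pj+k)\bfone=X+k\bfone$. So the added constant grows by one per \emph{single} block shift, not per $p$ blocks. That is exactly what makes $D^{p-1}S$ equal to $[X,X+\bfone,\ldots,X+(p-1)\bfone]$ repeated $p$ times, which Step~3 needs.

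The actual gap is Step~4, which you leave unresolved because you misread Lemma~\ref{lem:enumerateSDS}. The number of SDSs of period $p^n$ is $p^{p^{n-1}-n}=p^{q-n}$, not $p^{p^n-n}$. Your figure would make the number of constructed sequences \emph{exceed} the number of SDSs of period $p^{n+1}$, which contradicts Step~3, so it should have signalled an error. The count displayed in the proof of Lemma~\ref{lem:SDS_Z3} has the same slip, so it cannot serve as a check for $p=3$.

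With the correct figure, the number of triples $(Z,(Y_1,\ldots,Y_{p-2}),X)$ is $p^{q-1}\cdot p^{(p-2)q}\cdot p^{q-n}=p^{pq-n-1}=p^{p^n-(n+1)}$. This is exactly the number of SDSs of period $p^{n+1}$ given by Lemma~\ref{lem:enumerateSDS}. Since Step~3 shows the construction is injective, it is also onto, which proves the third bullet.
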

\begin{proof}
The proof is essentially the same as the one of Lemma~\ref{lem:SDS_Z3}.
The main difference, except for the increasing number of $Y_i$'s is that instead of applying $(\bE^{3^{n-1}}-1)^2$
on the sequence $[V , V+\bfone , V+ \bftwo]$ we apply $(\bE^{p^{n-1}}-1)^{p-1}$ on the sequence
$[V , V+\bfone , V+ \bftwo, \ldots, V + \bfone (p-1)]$, where $V=(L_0=Z, L_1=Z+Y_1,L_2 =Z+2Y_1 + Y_2, L_3,  \ldots, L_{p-1})$.
Instead of doing it in two steps as done in the proof of Lemma~\ref{lem:SDS_Z3}, it is done in one step using the equation
$$
(\bE^{p^{n-1}}-1)^{p-1} = \sum_{i=0}^{p-1} (-1)^i \binom{p-i}{i} (\bE^{p^{n-1}})^{p-1-i} .
$$
More details on this proof will be discussed in the draft on SDSs~\cite{Etz27}.
\end{proof}

\section{Non-Binary full-period STGCs}
\label{sec:STGC_basics}

In this section, we present and discuss the ordering of non-binary SDSs for the construction of full-period non-binary STGCs.
The requirement is that every two consecutive codewords differ in exactly one coordinate.
The section is partitioned into four subsections.
In Section~\ref{sec:require}, Theorem~\ref{thm:self_dualST} is generalized for a construction of a non-binary alphabet
based on non-binary SDSs. In Section~\ref{sec:differences} we describe a method for constructing
the $m^{m-2}$ SDSs of period $m^2$ for an STGC of length $m$ and period $m^m$ over $\Z_m$. Section~\ref{sec:PropP1P2}
outlines the conditions for an ordering of the SDSs for a recursive construction of the arrangement
required by the generalization of Theorem~\ref{thm:self_dualST}. Section~\ref{sec:recursive} describes the construction for
the arrangement of the SDSs to form the STGC over an alphabet of an odd prime size.

\subsection{Requirements for a construction of STGCs}
\label{sec:require}

Theorem~\ref{thm:single_necklaces} is readily generalized for sequences over $\Z_m$ without any modification.
Theorem~\ref{thm:self_dualST} is generalized to $\Z_m$ with a simple modification as follows.

\begin{theorem}
\label{thm:self_dualST_NB}
Let $S_0, S_1,\ldots,S_{r-1}$ be $r$ binary self-dual pairwise inequivalent full-order SDSs
of period $mn$ over $\Z_m$.
If for each $i$, $0 \leq i < r-1$, $S_i$ and $S_{i+1}$ differ in exactly
$m$ coordinates, and there also exists an integer $\ell$, where $\gcd(\ell,mn)=1$,
such that  $S_{r-1}$ and $\bE^{\ell} S_0$ differ in exactly $m$ coordinates,
then the following words form a length $n$, period $mnr$ STGC,
$$
\begin{array}{lclcccl}
\bF^0 S_0 & \hspace{0.08cm} & \bF^0 S_1 & \hspace{0.08cm} & \cdots & \hspace{0.08cm} & \bF^0 S_{r-1} \\
\bF^\ell S_0 & \hspace{0.08cm} & \bF^\ell S_1 & \hspace{0.08cm} & \cdots & \hspace{0.08cm} & \bF^\ell S_{r-1} \\
\bF^{2\ell} S_0 & \hspace{0.08cm} & \bF^{2\ell} S_1 & \hspace{0.08cm} & \cdots & \hspace{0.08cm} & \bF^{2\ell} S_{r-1} \\
\vdots & \hspace{0.08cm} & \vdots  & \hspace{0.08cm} & \vdots & \hspace{0.08cm} & \vdots \\
\bF^{(mn-1)\ell} S_0 & \hspace{0.08cm} & \bF^{(mn-1)\ell} S_1 & \hspace{0.08cm} & \cdots & \hspace{0.08cm} & \bF^{(mn-1)\ell} S_{r-1} \\
\end{array}.
$$
\end{theorem}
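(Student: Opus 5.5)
We check three things in turn: each listed word has length $n$, consecutive words differ in exactly one coordinate (including the wrap-around), and all $mnr$ words are distinct. The single-track property then follows from the layout. The key structural fact is that each $S_i$ is an SDS of period $mn$: $\bfone + S_i \simeq S_i$ and $S_i$ has full period. Here $\bF^j S_i$ denotes the window of length $n$ starting at position $j$, with indices taken modulo $mn$, as in Theorem~\ref{thm:self_dualST}.

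\textbf{Structure of an SDS.} For an SDS of period $mn$ over $\Z_m$ (the case of interest being $m=p$, length $p^{t}$), we need the shift $\delta$ with $\bE^{\delta} S_i = S_i+\bfone$ to satisfy $\delta \equiv n$. Indeed, by full period the shift map $k\mapsto$ added constant is an injective homomorphism from $\Z_{mn}$ into $\Z_m$ on the orbit. Hence $S_i=[X_i, X_i+\bfone,\ldots,X_i+(m-1)\bfone]$ with $|X_i|=n$, after fixing the normalization of the theorem (as in Lemma~\ref{lem:periodSDS}). Consequently $\bF^{j+n}S_i=\bF^jS_i+\bfone$ (up to the same generator), and every window of length $n$ of $S_i$ is determined by its position modulo $mn$.

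\textbf{Single change.} Fix a row $j\ell$. If $S_i$ and $S_{i+1}$ differ in exactly $m$ coordinates, the self-dual structure forces these to form one orbit $\{c, c+n, \ldots, c+(m-1)n\}$. The reason is that the difference $S_{i+1}-S_i$, taken with aligned representatives, is invariant under the $n$-shift, so its support is a union of such orbits; exactly $m$ differences means a single orbit. Every window of length $n$ contains exactly one element of each residue class modulo $n$, so $\bF^{j\ell}S_i$ and $\bF^{j\ell}S_{i+1}$ differ in exactly one coordinate. The same argument applies to the transition from $\bF^{j\ell}S_{r-1}$ to $\bF^{(j+1)\ell}S_0$, using that $S_{r-1}$ and $\bE^{\ell}S_0$ differ in $m$ coordinates. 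The final wrap from row $(mn-1)\ell$ back to row $0$ uses $\bE^{mn\ell}=\bfone$.

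\textbf{Distinctness and single-track.} Since $\gcd(\ell,mn)=1$, the row indices $j\ell$ run over all residues modulo $mn$. So the words are exactly all windows of length $n$ of all $S_i$. Two equal windows $\bF^aS_i=\bF^bS_{i'}$ determine the whole sequence via $X\mapsto[X,X+\bfone,\ldots]$, which gives $S_i\simeq S_{i'}$, hence $i=i'$. Full period then gives $a\equiv b$. For the single-track property, coordinate $k$ of $\bF^{j\ell}S_i$ equals coordinate $0$ of $\bF^{j\ell+k}S_i$. With $k=\ell u$, column $k$ is therefore column $0$ cyclically shifted by $u$ rows, i.e., by $ur$ positions, as in Theorem~\ref{thm:self_dualST}.

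\textbf{Main obstacle.} The delicate step is the single-change step: showing that the $m$ differing coordinates form exactly one orbit $\{c+kn\}$. This requires choosing representatives of $S_i$ and $S_{i+1}$ aligned so that both satisfy $\bE^nS=S+\bfone$ with the same constant. The hypothesis ``differ in exactly $m$ coordinates'' must be read for such an aligned pair; I would make this normalization explicit first.
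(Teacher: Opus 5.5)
Your route is the one the paper intends. The paper gives no separate proof; it presents the statement as the $\Z_m$ analogue of Theorem~\ref{thm:self_dualST}, which is exactly your three steps: the $m$ differing positions form one class $\{c+kn\}$ that meets every length-$n$ window once, a window determines its SDS, and $\gcd(\ell,mn)=1$ gives the single-track property. Those steps are sound, but the normalization underneath them is not justified as you describe it.

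The shifts $k$ for which $\bE^kS_i=S_i+a\cdot\bfone$ for some $a\in\Z_m$ form a subgroup $H$ of $\Z_{mn}$, not all of $\Z_{mn}$. Full period makes $k\mapsto a$ an injective homomorphism $H\to\Z_m$ whose image contains $1$. Hence $H=\{0,n,\ldots,(m-1)n\}$ and $\bE^nS_i=S_i+c_i\cdot\bfone$ for some unit $c_i$. This $c_i$ need not be $1$: for example, $[0,2,1]$ over $\Z_3$ has $\bE S=S+\bftwo$. Since $\bE^n$ commutes with every $\bE^a$, $c_i$ is the same for all cyclic shifts of $S_i$, so no choice of representatives can ``align'' it. What your single-change step and your distinctness step actually need is that $c_i$ does not depend on $i$.

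For $m=p$ an odd prime, this is forced by the hypothesis. Suppose $d=c_{i+1}-c_i\neq 0$. Then $D=S_{i+1}-S_i$ satisfies $D_{k+jn}=D_k+jd$, so on each of the $n$ residue classes modulo $n$ it takes every value of $\Z_p$ exactly once. Hence $S_i$ and $S_{i+1}$ differ in exactly $n(p-1)\neq p$ positions, a contradiction. Chaining along $i$ gives a common constant, which $\bE^\ell S_0$ also shares.

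For composite $m$ it is a genuine extra hypothesis. Over $\Z_4$ with $n=2$, take $S_0=[0,0,1,1,2,2,3,3]$, $S_1=[0,0,3,3,2,2,1,1]$, $S_2=[0,1,3,0,2,3,1,2]$ and $\ell=1$. These satisfy every stated hypothesis under Definition~\ref{def:SDSZm}, yet $\bF^0S_0=\bF^0S_1$, so the literal statement fails. ``Full-order'' therefore has to be read as the form $[X_i,X_i+\bfone,\ldots,X_i+(m-1)\bfone]$ used throughout the paper; for odd primes it can instead be derived as above. With that in place, your proof is complete.
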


The arrangement of $S_0, S_1,\ldots,S_{r-1}$ defined in Theorem~\ref{thm:self_dualST_NB} will be called the {\bf \emph{initial arrangement}}.
Such an arrangement guarantees the existence of a length $n$, period $mnr$ STGC, over~$\Z_m$.

\begin{example}
\label{exm:STGCm=3,n=3}
For $m=3$ and $n=3$, there are three SDSs of period 9 which contain each
word of length three over $\Z_3$ exactly once. These three sequences are the first three columns on the left of the $9 \times 27$ array,
These three columns form the initial arrangement for these parameters.
The $27$ columns in the $27 \times 9$ array, which follows these three columns, have all the required shifts
of the first three sequences. The first $3$ rows in the array are a length 3 period 27 STGC over $\Z_3$.
$$
\begin{array}{lcccc}
 &  001 & & & 001122222~~112200000~~220011111 \\
 &  011 & & & 011111001~~122222112~~200000220 \\
 &  000 & & & 000220011~~111001122~~222112200 \\
 \hline
 &  112 & & & 112200000~~220011111~~001122222 \\
 &  122 & & & 122222112~~200000220~~011111001 \\
 &  111 & & & 111001122~~222112200~~000220011 \\
 \hline
 &  220 & & & 220011111~~001122222~~112200000 \\
 &  200 & & & 200000220~~011111001~~122222112 \\
 &  222 & & & 222112200~~000220011~~111001122 \\
\end{array}.
$$
%
\hfill\quad $\blacksquare $
\end{example}

\subsection{An initial arrangement based on differences}
\label{sec:differences}

We consider now a construction of the $m^{m-2}$ SDSs of length $m^2$ over $\Z_m$,
which might be used to construct an initial arrangement to form an STGC of length $m$ and period $m^m$.
Such an initial arrangement can be further modified to be used as a seed for the recursive construction
for an STGC of length $p^t$ with period $p^{p^t}$, where $p$ is an odd prime.
We can order the SDSs of period $m^2$ lexicographically such that the first $m$ digits will start with a \emph{zero}
and the sum of these $m$ digits will be 0~modulo~$m$. This simple ordering contains $m^{m-2}$ SDSs of period~$m^2$. Each word
of length~$m$ over~$\Z_m$ is a subsequence of exactly one of these sequences. Now, we should form
an initial arrangement for these SDSs and use Theorem~\ref{thm:self_dualST_NB} to form an STGC.

However, there might be a better way to form these SDSs and arrange them in a simpler way.
Let $\cT$ be the ordered list of the $m^{m-2}$ SDSs of period $m^2$ in an STGC of length $m$ and period $m^m$.
The first $m$ digits (and clearly the first $m+1$ digits) determine the whole SDS.
Assume that the first such $m+1$ digits in a word of an SDS are $0x_2 x_3 ~ \ldots ~ x_m 1$.
There are $m$ such words starting with \emph{zero} in an SDS of period $m^2$.
Consider now the differences $(x_2,x_3 - x_2 , x_4 -x_3,\ldots, x_m - x_{m-1}, 1 - x_m)$,
i.e., $(\bE -1) (0,x_2,x_3,\cdot,x_m,1)$, in this word.
Each word of length $m+1$ that starts with a \emph{zero} defines a unique such sequence of $m$ differences.
The sum of these differences is 1 modulo $m$.
We order this set of differences lexicographically.
In this order, there are $m^{m-1}$ sequences of differences of length~$m$, where the sum of the differences in each sequence is 1~modulo~$m$.
Each SDS has $m$ such sequences of differences. Every two such sequences are equivalent as cyclic sequences, but different as words of length $m$.

\begin{example}
For $n=m=4$, the 64 sequences of differences are ordered lexicographically in a $4 \times 64$ array as follows.
$$
\begin{array}{lc}
 &  0000000000000000111111111111111122222222222222223333333333333333 \\
 &  0000111122223333000011112222333300001111222233330000111122223333 \\
 &  0123012301230123012301230123012301230123012301230123012301230123 \\
 &  1032032132102103032132102103103232102103103203212103103203213210 \\
\end{array}.
$$
\hfill\quad $\blacksquare $
\end{example}

Two sequences of differences are associated with the same SDS if and only if one is a cyclic shift of the other.
From all possible $m$ cyclic shifts, only one sequence of differences is taken as a representative.
For each sequence of differences, there are $m$ distinct cyclic shifts since
the sum 1~modulo~$m$ of the differences implies that there is no periodicity in a sequence of differences.
Therefore, there are $m^{m-2}$ cyclic sequences of differences in the final set of differences.
This set of sequences (the representatives) containing the differences is ordered in a way that every two consecutive sequences
(including the last and the first)
have an identical subsequence of length $m-2$. Therefore, in the corresponding SDSs of length $m^2$,
the $m^2$~subsequences of length $m$ can be paired such that each pair (including the last and the first in
an appropriate shift) differs in exactly one coordinate.
Thus, the conditions of Theorem~\ref{thm:self_dualST_NB} are satisfied and an STGC of length $m$ and period $m^m$ is constructed.

\begin{example}
\label{ex:m=4}
For $n=m=4$, the 64 sequences of differences are ordered in groups of size 4, where all the sequences in a group are
cyclic shifts of one another and ordered lexicographically. They are ordered in a $4 \times 64$ array, where
the groups are ordered in the same order as the order of the final $16$ representatives.
$$
\begin{array}{lc}
 &  0001~0122~1112~0122~0023~0333~1233~1233~2223~0122~0113~1233~0113~0133~0023~0023 \\
 &  0010~1202~1121~2201~2300~3033~3123~3312~2232~2012~1031~2313~3011~1303~0230~0302 \\
 &  0100~2210~1211~1022~0032~3303~3312~2133~2322~2201~3110~3321~1301~3310~2300~3200 \\
 &  1000~2021~2111~2210~3200~3330~2331~3321~3222~1220~1301~3132~1130~3031~3002~2030 \\
\end{array}.
$$

One representative sequence of differences is taken from each group of $4$.
The 16 representative sequences of differences are ordered in a $4 \times 16$ array such that every
two consecutive columns have an identical subsequence of length~2.
$$
\begin{array}{lc}
 &  0012232331111100 \\
 &  1111001120021123 \\
 &  0222333322130332 \\
 &  0210033222333000 \\
\end{array}.
$$
Each sequence of differences is translated to a word of length~5, which is extended to an SDS of length~16.
These SDSs are the columns of the following $16 \times 16$ array. The 16 sequences defined by the $16$ columns form
the arrangement of the 16 SDSs to satisfy the conditions of Theorem~\ref{thm:self_dualST_NB}.
This arrangement leads to an STGC of length 4 and period 256.
$$
\begin{array}{ccc}
 0032211000000000 & 0 & \cdots \\
 0000003331111100 & 0 & \cdots \\
 1111000011132223 & 0 & \cdots \\
 1333333333222111 & 1 & \cdots \\
 \hline
 1103322111111111 & 1 & \cdots \\
 1111110002222211 & 1 & \cdots \\
 2222111122203330 & 1 & \cdots \\
 2000000000333222 & 2 & \cdots \\
 \hline
 2210033222222222 & 2 & \cdots \\
 2222221113333322 & 2 & \cdots \\
 3333222233310001 & 2 & \cdots \\
 3111111111000333 & 3 & \cdots \\
 \hline
 3321100333333333 & 3 & \cdots \\
 3333332220000033 & 3 & \cdots \\
 0000333300021112 & 3 & \cdots \\
 0222222222111000 & 0 & \cdots \\
\end{array}.
$$
\hfill\quad $\blacksquare $
\end{example}

\subsection{Further requirements for the recursive construction}
\label{sec:PropP1P2}

The next step is to construct a full-period STGC for words whose length is $m^t$, where $t > 1$.
The construction will be recursive, where usually the basis is the construction for words of length~$m$, i.e., $t=1$,
derived from $m^{m-2}$ SDSs whose period is $m^2$ ($m=3$ requires a larger basis with $729$ SDSs of period $27$). Such a recursive construction
will be given when $m=p$ is an odd prime, and such a basis was presented for $p=3$ and $p=5$.

The construction will start with an initial arrangement of $p^{p^{t-1}-t}$ SDSs of period $p^t$, $t > 1$,
as described in Theorem~\ref{thm:self_dualST_NB}. By Theorem~\ref{thm:self_dualST_NB}, this initial arrangement implies the existence
of an STGC of length $p^{t-1}$ and period $p^{p^{t-1}}$. For the recursive construction, we start with
such an initial arrangement $S_0,S_1,\ldots,S_{r -1}$, where $r = p^{p^{t-1}-t}$.
From this initial arrangement, two more properties are required.

\begin{itemize}
\item[(P1)] For each $i$, $0 \leq i \leq r-1$, $S_i$ and $S_{i+1}$ (subscripts taken modulo $r$)
differ in exactly $p$ positions, $k+ j \cdot p^{t-1}$, for each $0 \leq j \leq p-1$, where $0 \leq k \leq p^{t-1} -1$
(note that in Theorem~\ref{thm:self_dualST_NB} there is no such requirement for $i=r-1$).

\item[(P2)] Let diff$(S_i,S_{i+1})$ denote the first position in which $S_i$ and $S_{i+1}$ differ and let
$$
\cD_{p^t} = \{\text{diff}(S_i,S_{i+1}) ~:~ 0 \leq i \leq r-1 \} .
$$
Then $\cD_{p^t} = \{0,1,\ldots,p^{t-1}-1\}$.
\end{itemize}

The recursive construction has several steps, and it is a modification for the construction of binary STGC
of length $2^t$ and period $2^{2^t} - 2^{t+1}$ as was described in~\cite{EtPa96}.

Before starting to describe the general method for any odd prime $p$ we consider $p=3$.
The simpler recursive construction for binary SDSs was considered first in~\cite{EtLe84}.

The basis of the recursive construction is the initial arrangement of $729$ SDSs of length $27$ (given in Appendix~\ref{sec:seed_p3}).
This arrangement for these $729$ SDSs of period $27$ satisfies properties (P1) and (P2).

\subsection{The recursive construction}
\label{sec:recursive}

Now, we assume the existence of such an initial arrangement of the $3^{3^{t-1}-t}$ SDSs of period~$3^t$, $t >2$,
for an STGC of length $3^{t-1}$ and period $3^{3^{t-1}}$ that also satisfies properties (P1) and (P2).
Each SDS in such an initial arrangement has period $3^t$ and the form $[X , X+\bfone , X+\bftwo]$, where $X$ is
a word of length $3^{t-1}$.

We now have to describe such an initial arrangement of the $3^{3^t-t-1}$ SDSs of period $3^{t+1}$, $t > 1$.
Each SDS in this arrangement has period $3^{t+1}$ and the form $[V , V+\bfone , V+\bftwo]$. It is constructed
from an SDS $[X , X+\bfone , X+\bftwo]$, where $X$ is a word of length $3^{t-1}$ and
$V=(Z , Z+Y , Z+2Y+X)$, as proved in Lemma~\ref{lem:SDS_Z3}. The words $Z$ and $Y$ are of length $3^{t-1}$, where $Z$ can be any word
over~$\Z_3$ that starts with a \emph{zero} and $Y$ can be any word of length $3^{t-1}$ over $\Z_3$.

When $Z$ and $Y$ are fixed, then for $X$ we are using the word $X$ from $[X , X+\bfone , X+\bftwo]$ of all the
SDSs of the initial arrangement (having also properties (P1) and (P2)~) with SDSs of period $3^t$ to form SDSs of period $3^{t+1}$ using the recursion
$[V , V+\bfone , V+\bftwo]$, where $V=(Z , Z+Y , Z+2Y+X)$.
The outcome is an arrangement (not initial) with $3^{3^{t-1}-t}$ SDSs of period $3^{t+1}$.
In this arrangement every two consecutive SDSs differ in just one position on the first $3^t$ coordinates.
Moreover, by (P1) this arrangement is cyclic as the first SDS and the last SDS also differ in just one position on the first $3^t$ coordinates.
This component (arrangement) will be denoted by $A(Z,Y)$.
Since $X$ influences only the last $3^{t-1}$ positions of $V$, it follows that for two consecutive sequences in this arrangement
$S_i$ and $S_{i+1}$ we have $2 \cdot 3^{t-1} \leq \text{diff}(S_i,S_{i+1}) \leq 3^t -1$ and since $\cD_{3^t} = \{0,1,\ldots,3^{t-1}-1\}$,
each value in the range $\{ 2 \cdot 3^{t-1}, 2 \cdot 3^{t-1}+1,\ldots, 3^t-1\}$ is attained for $\cD_{3^{t+1}}$.
There are $3^{3^{t-1}-1}$ words of length $3^{t-1}$ that start with a \emph{zero} and $3^{3^{t-1}}$ words of length $3^{t-1}$.
Therefore, the number of distinct components constructed in this way is $3^{3^{t-1}-1}\cdot 3^{3^{t-1}}=3^{3^t - 3^{t-1} -1}$.

%

The next stage is more complicated, and in this step, we merge all the $3^{3^t - 3^{t-1} -1}$ components into
an initial arrangement with $3^{3^t -t-1} $ SDSs of period $3^{t+1}$ that satisfies properties (P1) and (P2).
We start by merging all the components whose SDSs share the same values in the first $3^{t-1}$ coordinates.
We partition all the nonzero ternary words of length $3^{t-1}$
into $\frac{3^{3^{t-1}-1}}{2}$ pairs, such that the two words in each pair
have the same weight and they differ in exactly one nonzero coordinate.
These pairs are ordered in a list $\cL$ by their weights from smallest to the largest weight,
where the order between pairs of the same weight is arbitrary. For a given word $Z$, of length $3^{t-1}$, that starts with a \emph{zero},
at each step now we will have an arrangement $\cA(Z)$ obtained by merging some of the components of the form $A(Z,Y)$.
The arrangement starts with the component $A(Z,\bfzero)$. At the general step we will have an arrangement $\cA(Z)$ obtained
so far and the first pair $\{ Y_1,Y_2 \}$ in the list $\cL$ whose components $A(Z,Y_1)$ and $A(Z,Y_2)$ were not merged into $\cA(Z)$.
The two words $Y_1$ and $Y_2$ differ only in one coordinate, the $k$-th coordinate. Let $S=[X_1,X_1+\bfone,X_1+\bftwo]$ and
$S'=[X_2,X_2+\bfone,X_2+\bftwo]$ be two consecutive SDSs of the initial arrangement that differ in the $k$-th coordinate,
where both have a different nonzero element. Let $Y_3$
be the word of length $3^{t-1}$ that differ from $Y_1$ and $Y_2$ only in the $k$-th coordinate. It has weight less by one
than $Y_1$ (also $Y_2$) and hence the
component $A(Z,Y_3)$ was already merged into $\cA(Z)$. We consider now the following two consecutive SDSs in the three components
$A(Z,Y_1)$, $A(Z,Y_2)$ and $\cA(Z)$ (the last was before in $A(Z,Y_3)$).

\noindent
In $A(Z,Y_1)$ we consider the following SDSs denoted by $(Y_1,X_1)$ and $(Y_1,X_2)$:
$$
[Z,Z+Y_1,Z+2Y_1 +X_1,Z+\bfone,Z+Y_1+\bfone,Z+2Y_1 +X_1+\bfone,Z+\bftwo,Z+Y_1+\bftwo,Z+2Y_1 +X_1+\bftwo]
$$
$$
[Z,Z+Y_1,Z+2Y_1 +X_2,Z+\bfone,Z+Y_1+\bfone,Z+2Y_1 +X_2+\bfone,Z+\bftwo,Z+Y_1+\bftwo,Z+2Y_1 +X_2+\bftwo]
$$

\noindent
In $A(Z,Y_2)$ we consider the SDSs $(Y_2,X_1)$ and $(Y_2,X_2)$:
$$
[Z,Z+Y_2,Z+2Y_2 +X_1,Z+\bfone,Z+Y_2+\bfone,Z+2Y_2 +X_1+\bfone,Z+\bftwo,Z+Y_2+\bftwo,Z+2Y_2 +X_1+\bftwo]
$$
$$
[Z,Z+Y_2,Z+2Y_2 +X_2,Z+\bfone,Z+Y_2+\bfone,Z+2Y_2 +X_2+\bfone,Z+\bftwo,Z+Y_2+\bftwo,Z+2Y_2 +X_2+\bftwo]
$$

\noindent
In $\cA(Z)$ we consider the SDSs $(Y_3,X_1)$ and $(Y_3,X_2)$ (that were before in $A(Z,Y_3)$):
$$
[Z,Z+Y_3,Z+2Y_3 +X_1,Z+\bfone,Z+Y_3+\bfone,Z+2Y_3 +X_1+\bfone,Z+\bftwo,Z+Y_3+\bftwo,Z+2Y_3 +X_1+\bftwo]
$$
$$
[Z,Z+Y_3,Z+2Y_3 +X_2,Z+\bfone,Z+Y_3+\bfone,Z+2Y_3 +X_2+\bfone,Z+\bftwo,Z+Y_3+\bftwo,Z+2Y_3 +X_2+\bftwo]
$$

The goal is to merge the three lists $A(Z,Y_1)$, $A(Z,Y_2)$ and $\cA(Z)$, where these SDSs
in a new arrangement will be in the following order
\begin{equation}
\label{eq:mergeY1Y2Y3}
\ldots, (Y_3,X_2), (Y_2,X_1), \ldots, (Y_2,X_2), (Y_1,X_1), \ldots,   (Y_1,X_2), (Y_3,X_1), \ldots ~~~.
\end{equation}
To fulfill this goal, it is required that the following equalities will be satisfied:
$$
Z + 2Y_3 +X_2 = Z+2Y_2 +X_1
$$
$$
Z + 2Y_2 +X_2 = Z+2Y_1 +X_1
$$
$$
Z + 2Y_1 +X_2 = Z+2Y_3 +X_1 ~.
$$
These three equalities are equivalent to the following equalities:
$$
Y_3 - Y_2 = Y_2 -Y_1 = Y_1 - Y_3
$$
and
$$
X_1 - X_2 = 2(Y_3 - Y_2) = 2(Y_2 - Y_1) = 2(Y_1 - Y_3)~.
$$

Since $X_2$ and $X_1$ differs on the in the $k$-th coordinate and the same is true for each pair of $Y_1,Y_2,Y_3$ we
can consider only one coordinate (the $k$-th coordinate) of these words. Let $y_i$ denote the $k$-th coordinate of $Y_i$
and $x_i$ the $k$-th coordinate of $X_i$. Since the order of the SDSs in an arrangement can be reversed and we can also
switch between $Y_1$ and $Y_2$, it follows that
w.l.o.g. we can assume that $x_1 =0$ and $x_2 =2$. Furthermore, since $\{ y_1,y_2,y_3\} = \{0,1,2\}$, it follows
that $Y_1$ can be chosen in a way that $2=x_2 - x_1 = 2(y_3 -y_1)$, i.e., $y_3 = y_1 +1$ and $y_2 = y_3 +2$.
Hence, we can merge $A(Z,Y_1)$ and $A(Z,Y_2)$ into $\cA(Z)$ to obtain a new arrangement $\cA(Z)$.
The process ends when each component $A(Y,Z)$ is merged into $\cA(Z)$. Moreover, the $k$-th coordinate can be chosen
in a way that each integer in $\{0,1,\ldots,3^{t-1}-1\}$ is chosen at least once and hence each value
in the range $\{ 3^{t-1}, 3^{t-1}+1,\ldots, 2 \cdot 3^{t-1}-1\}$, is attained for $\cD_{3^{t+1}}$.

We are now in a position to merge all arrangements $\cA(Z)$, where $Z$ is any word of length $3^{t-1}$ that starts with a \emph{zero}
and obtain $\cD_{3^{t+1}} = \{ 0,1,2,\ldots,3^t -1 \}$.
We continue and at a general step we have an arrangement $\cA$ obtained so far and the list of words
of length $3^{n-1}$ that start with a \emph{zero} ordered by weights.
At the general step we will have an arrangement $\cA$ obtained
so far and the first element $Z_1$ in the list whose components $\cA(Z_1)$ were not merged into $\cA$.
We continue similarly to the mwthod used before to merge $\cA(Z_1)$ into $\cA$.
Let $Z_2$ be a word of length $3^{n-1}$ that starts with a \emph{zero} and differs in exactly one coordinate, say $k$, from $Z_1$.
Let $Y_1$ and $Y_2$ be two words of length $3^{t-1}$ that differ only in the $k$-th coordinate and $Y_2 - Y_1 = Z_1 -Z_2$
(note that there are many possible choices for $Y_1$ and $Y_2$).
Let further $X_1$ and $X_2$ be two words of length $3^{t-1}$ that differ in the $k$-th coordinate, $X_2 - X_1 = Y_1-Y_2$,
and are contained in the two SDSs $[X_1 ,X_1 + \bfone ,X_1 +\bftwo]$ and $[X_2 ,X_2 + \bfone ,X_2 +\bftwo]$ of the initial
arrangement for SDSs of period $3^t$. Given $Z_1$ and $Z_2$, we choose $Y_1$ and $Y_2$ such that $Y_2 - Y_1 = Z_1 -Z_2$
and also $X_1$ and $X_2$ that differ in the $k$-the coordinate such that $X_2 - X_1 = Y_1-Y_2$ and are the first $3^{t-1}$ digits
in the two SDSs differing in this coordinate.
By the construction so far we have in $A(Z_1,Y_1)$ the following two SDSs,
$(Z_1,Y_1,X_1)$ and $(Z_1,Y_1,X_2)$, that are consecutive SDSs in this component:
$$
[Z_1,Z_1+Y_1,Z_1+2Y_1 +X_1,Z_1+\bfone,Z_1+Y_1+\bfone,Z_1+2Y_1 +X_1+\bfone,Z_1+\bftwo,Z_1+Y_1+\bftwo,Z_1+2Y_1 +X_1+\bftwo]
$$
and
$$
[Z_1,Z_1+Y_1,Z_1+2Y_1 +X_2,Z_1+\bfone,Z_1+Y_1+\bfone,Z_1+2Y_1 +X_2+\bfone,Z_1+\bftwo,Z_1+Y_1+\bftwo,Z_1+2Y_1 +X_2+\bftwo].
$$
In $A(Z_2,Y_2)$ we have the following two SDSs that are consecutive SDSs, and denoted by
$(Z_2,Y_2,X_1)$ and $(Z_2,Y_2,X_2)$, in this component:
$$
[Z_2,Z_2+Y_2,Z_2+2Y_2 +X_1,Z_2+\bfone,Z_2+Y_2+\bfone,Z_2+2Y_2 +X_1+\bfone,Z_2+\bftwo,Z_2+Y_2+\bftwo,Z_2+2Y_2 +X_1+\bftwo]
$$
and
$$
[Z_2,Z_2+Y_2,Z_2+2Y_2 +X_2,Z_2+\bfone,Z_2+Y_2+\bfone,Z_2+2Y_2 +X_2+\bfone,Z_2+\bftwo,Z_2+Y_2+\bftwo,Z_2+2Y_2 +X_2+\bftwo].
$$
Since $Y_2 - Y_1 = Z_1 -Z_2$. it follows that $Z_2 + Y_2 = Y_1 + Z_1$ and since also $X_2 - X_1 = Y_1-Y_2$, it follows that
$Z_1 + 2Y_1 + X_1 = Z_2 + 2Y_2 + X_2$. Hence, we can form a new arrangement in $\cA$, with the following order of these four SDSs of period $3^{t+1}$.
$$
\ldots, (Z_1,Y_1,X_1),(Z_2,Y_2,X_2), \ldots, (Z_2,Y_2,X_1),(Z_1,Y_1,X_2),\ldots
$$
This also add the integer $k$ to $\cD_{3^{t+1}}$.
This process yields all the values in the set $\{1,2,\ldots, 3^{t-1}  -1 \}$ to $\cD_{3^{t+1}}$.
However, the value $0$ is missing in $\cD_{3^{t+1}}$, so we need to obtain it in a slightly different way.

Recall that if $[X ,X + \bfone ,X +\bftwo]$ is an SDS of period $3^n$, then $[V ,V + \bfone ,V +\bftwo]$,
where $V=(Z , Z+Y, Z+2Y+X)$, and $Z$ and $Y$ are words $3^{n-1}$ is an SDS of period $3^{n+1}$.
If $Z+ \bfone$ or $Z+\bftwo$ are taken instead of $Z$, the same SDS is obtained. Hence, we can use the
the all-one word of length $3^{n-1}$ for $Z$ in the recursive construction of SDSs of period $3^{n+1}$ instead of the all-zero word.
In fact, no change should be done, except that when $Z_1=(011 ~\cdots~ 11)$, we consider $Z_2$ to be the all-one word and
with the same process we obtain the value $0$ for $\cD_{3^{t+1}}$.

Finally, the partition of pairs described in the construction is possible and proved in the following lemma.
\begin{lemma}
The set of nonzero ternary words of length $n$, $n \geq 1$, can be partitioned into pairs such that the two words in a pair have the same weight
and differ in exactly one coordinate.
For each coordinate there exists at least one pair that differs in that coordinate.
\end{lemma}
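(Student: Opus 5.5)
The plan is to give an explicit fixed-point-free involution on the set of nonzero ternary words of length $n$, and to read the pairs off its orbits. Throughout, weight means Hamming weight, i.e., the number of nonzero coordinates, as in the construction of the list $\cL$ above. For a nonzero word $Y=(y_1,\ldots,y_n)$ over $\Z_3$, let $k(Y)$ be the index of its first nonzero coordinate. Define $\phi(Y)$ as the word obtained from $Y$ by replacing $y_{k(Y)}$ with $2y_{k(Y)}$ (mod $3$), i.e., exchanging $1$ and $2$ in that single coordinate and leaving all other coordinates unchanged.

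First I will check that $\phi$ is a well-defined involution with no fixed points.
\begin{itemize}
\item Since $y_{k(Y)}\in\{1,2\}$, the value $2y_{k(Y)}$ is the other nonzero element. So $\phi(Y)$ is nonzero and $\phi(Y)\neq Y$.
\item The coordinates $1,\ldots,k(Y)-1$ of $\phi(Y)$ are still zero and coordinate $k(Y)$ is still nonzero, so $k(\phi(Y))=k(Y)$.
\item Consequently $\phi(\phi(Y))=Y$, because $2\cdot 2y=4y=y$ in $\Z_3$.
\end{itemize}
Hence the orbits $\{Y,\phi(Y)\}$ partition the $3^n-1$ nonzero words into $(3^n-1)/2$ pairs; this is consistent with $3^n-1$ being even.

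Next I will verify the two required properties of each pair. The words $Y$ and $\phi(Y)$ differ only in coordinate $k(Y)$. In that coordinate both entries are nonzero, so the two words have the same weight. This is exactly the situation used when merging $A(Z,Y_1)$ and $A(Z,Y_2)$. Moreover, the third word $Y_3$, which agrees with $Y_1$ and $Y_2$ except for a zero in that coordinate, has weight smaller by one, as the construction requires.

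For the covering statement, fix a coordinate $k$ and let $e_k$ be the word with a $1$ in coordinate $k$ and zeros elsewhere. Then $k(e_k)=k$, so the pair $\{e_k,2e_k\}$ differs in coordinate $k$.

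I do not expect a genuine obstacle. The only point needing care is that changing the first nonzero coordinate does not move the position of the first nonzero coordinate, which is what makes $\phi$ an involution. If one wants many pairs differing in each coordinate, the same argument works with ``first nonzero coordinate'' replaced by any rule that selects a nonzero coordinate of $Y$ depending only on the support of $Y$. Such a rule is unaffected by the $1\leftrightarrow2$ exchange, so the map is again an involution.
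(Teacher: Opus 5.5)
Your proof is correct, and it reaches the paper's partition by a different route.

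The paper argues by induction on $n$. Each pair $\{X,Y\}$ of length $n$ is lifted to the three pairs $\{(X,0),(Y,0)\}$, $\{(X,1),(X,2)\}$ and $\{(Y,1),(Y,2)\}$. The covering property is inherited through the pairs $\{(X,0),(Y,0)\}$, and the new last coordinate is covered by the pairs $\{(X,1),(X,2)\}$.

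Unwinding that recursion shows it pairs each word with the one obtained by swapping $1$ and $2$ in its \emph{last} nonzero coordinate. So the two partitions are mirror images; the difference is closed-form definition versus induction, not the underlying idea.

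Your direct involution has two concrete advantages:
\begin{itemize}
\item It needs no induction: the fixed-point-free and same-weight checks are immediate.
\item It is complete as stated. The paper's lifted pairs account for only $3(3^n-1)=3^{n+1}-3$ of the $3^{n+1}-1$ nonzero words of length $n+1$. The inductive step as written therefore silently omits the pair $\{(0,\ldots,0,1),(0,\ldots,0,2)\}$. Your definition covers these words automatically, since their first nonzero coordinate is the last one.
\end{itemize}

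Your closing remark is also correct: any rule that picks a nonzero coordinate depending only on the support of the word gives an involution. This adds flexibility in which coordinates are hit, though the paper's merging step only needs each coordinate to occur at least once.
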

\begin{proof}
The proof is by induction. The basis for $n=1$ is trivial.

Assume that the claim is true for some $n \geq 1$.

Consider the nonzero ternary words of length $n+1$. We are using the following partition of these words.
If two words $X$ and $Y$ of length $n$ are paired by the induction
hypothesis, then we use the following three pairs for words of length $n+1$,
$\{(X,0),(Y,0)\}$, $\{(X,1),(X,2)\}$, and $\{(Y,1),(Y,2)\}$. This complete the partition into pairs
of all the nonzero ternary words. By this definition, using also the induction hypothesis,
the two words in a pair have the same weight and they differ in exactly one nonzero coordinate.
Moreover, pairs of the form $\{(X,1),(X,2)\}$ differs in the last coordinate, and on each other coordinate.
If $X$ and $Y$ differ in the $k$-th coordinate, then any pair $\{(X,0),(Y,0)\}$ differs in the $k$-th coordinate. Hence,
by the induction hypothesis, for each coordinate there is at least one pair that differs in this coordinate.
\end{proof}

The same procedure will work for any given prime $p$, but merging the components will become more evolved. For example,
when $p=5$ we will have to consider $Z$, $Y_1$, $Y_2$, and $Y_3$.
The  first step will be to generate a component $A(Z,Y_1,Y_2,Y_3)$ for each for words $Y_1$, $Y_2$, $Y_3$ of length $5^{t-1}$ over $\Z_5$
and a word $Z$ of length $5^{t-1}$, that starts with a \emph{zero}, over $\Z_5$.
The second step  will be to merge components with the same $Y_1$, $Y_2$, and $Z$, into
a component $A(Z,Y_1,Y_2)$ and after that to merge these larger components into $A(Z,Y_1)$.
Finally, for each word $Z$ of length $5^{t-1}$ that starts with a \emph{zero} we have a components $\cA(Z)$.
In the process we obtain the required $\cD_{5^{t+1}}$.

Having the right seeds for $p=3$ and $p=5$ (see Appendix~\ref{sec:seed_p3} and Appendix~\ref{sec:seed_p5}, respectively),
an STGC of length $p^t$ and period $p^{p^t}$ was constructed for each $t \geq 1$.

\section{Conclusions and Directions for Future Research}
\label{sec:conclude}

A recursive construction for non-binary STGCs based on non-binary SDSs is presented.
This construction is implemented to construct STGCs of length $p^t$ and period $p^{p^t}$,
when $p=3$ and $p=5$ for $t \geq 2$. These are the first and only known two infinite families of STGCs of full-period.
When $p>5$ is an odd prime, the construction can be implemented to construct such codes if an appropriate
seed for the $p^{p-2}$ SDSs of period $p^2$ is constructed. This new direction of research raised
some old open problems again and some new ones as well as some other directions for future research.

\begin{enumerate}
\item Binary SDSs are generated by a feedback shift register known as the complemented cycling register.
The generalization of this register for non-binary alphabets of prime size and non-prime size raises
many interesting construction and enumeration problems. Some of them are considered in~\cite{Etz26,Etz27}.
However, there are more directions not considered in these papers.

\item It is conjectured that binary STGCs of length $p$ and period $2^p-2$ can be obtained by using
Theorem~\ref{thm:single_necklaces} and also by using Theorem~\ref{thm:self_dualST}.
Such codes are known to exist for all primes up to $p=19$~\cite{Etz07}.
Can the conjecture be proved, at least for an infinite family of primes?

\item Using Theorem~\ref{thm:self_dualST}, other binary STGCs, with maximum possible period, might exist.
Can such STGCs be constructed? Again, we are looking for an infinite family of such codes.

\item The construction given in the paper can be used only for primes. What about a construction
for other alphabet sizes? When $m=4$, a construction for STGC of length $4$ and period $256$ was given in Example~\ref{ex:m=4}.

\item An ordering of the differences was suggested as a potential idea for constructing and ordering SDSs.
These orderings were used to find the ordering of the SDSs when $m=4$ and $m=5$. Can it be done for larger $m$?

\item Can appropriate seeds be found for primes larger than $5$? We believe that using some effort and computer search,
more seeds can be found.
\end{enumerate}

\appendices

\section{Seed arrangement for $p=3$}
\label{sec:seed_p3}

Unfortunately for $p=3$ there is no arrangement of the $p^{p-2}=3$ SDSs of length $9$ that
satisfies the condition of Theorem~\ref{thm:self_dualST_NB} and also Properties (P1) and (P2).
This forced us to look for a seed with the $729$ SDSs of period $27$. There is certainly a way to use
a computer search and to find such an arrangement. However, a more interesting way is to find
a combinatorial method to find such an arrangement. This combinatorial way might help to find seeds
for other odd primes.

There are three SDSs of period $9$ and we arrange them with a fixed 3-tuple, for each one, to be regarded as the first one as follows

$$
\begin{array}{cc}
A & =[000111222] \\
B & =[010121202] \\
C & =[110221002]
\end{array}
$$

This is an initial arrangement that satisfies Theorem~\ref{thm:self_dualST_NB} as demonstrated in Example~\ref{exm:STGCm=3,n=3}.

We are now using another method to construct and arrange the SDSs of period $27$.
This method called {\bf \emph{interleaving}} was used efficiently in multidimensional coding~\cite{BBV98,EtVa02},
construction of covering sequences~\cite{CETV25}, etc. It will be further discussed in detail in~\cite{Etz27}.
Let $S_1,S_2,S_3$ be three SDSs of period $3^n$ (not necessarily distinct) over $\F_3$.
For each sequence, a starting point is selected such that the $i$-th sequence is
$$
S_i = [s_{i,0},s_{i,1},\ldots,s_{i,3^n-1}].
$$
The {\bf \emph{interleaving}} of the sequences $S_1,S_2,S_3$, is the sequence
$$
S = [s_{1,0},s_{2,0},s_{3,0},s_{1,1},s_{2,1},s_{3,1},\ldots,s_{1,3^n-1},s_{2,3^n-1},s_{3,3^n-1}],
$$
where $s_{i,j}$ is the value of coordinate $j$ in the sequence $S_i$. The sequence $S$ is an SDS of period~$3^{n+1}$
and each SDS of period~$3^{n+1}$ can be constructed in this way.
The first sequence $S_1$ is also taken at its starting point, i.e., zero shift.
The other two sequences can be taken in $3^n$ possible shifts. If the three interleaved sequences are not equivalent
(two can be equivalent) then all the constructed SDSs of period $3^{n+1}$ are distinct.

There are $11$ possible cyclic orders of the three SDSs of period $9$ and they are of three types:
\begin{itemize}
\item[(1)] Two sequences participate in the interleaving, i.e., there are size
subtypes $[AAB]$, $[AAC]$, $[BBA]$, $[BBC]$, $[CCA]$, and $[CCB]$.

\item[(2)] All the sequences to be interleaved are different, i.e., there are two subtypes $[ABC]$ and $[ACB]$.

\item[(3)] All the sequences to be interleaved are the same, i.e., there are three subtypes $[AAA]$, $[BBB]$, and $[CCC]$.
\end{itemize}

Let us enumerate the number of sequences of each type.

When exactly two SDSs that are interleaved are the same, e.g., the subtype $[AAB]$, we can order the first appearance and the second appearance
in the interleaving. In other words, we can decide that the first sequence is $A$, the second is $A$, and the third is $B$.
We can decide that the first sequence is $A$, the second is $B$, and the third is $A$.
We can also decide that the first sequence is $B$, the second is $A$, and the third is~$A$.
Assume we decided that the first sequence is $A$, the second is~$B$, and the third is $A$.
We will take the first sequence $A$ in its zero shift, i.e., $[000111222]$, the second $A$ and $B$ will
be taken in all possible $9$ shifts to obtain $81$ SDSs of period $27$.
The same is true for the other subtypes, i.e., $[AAC]$, $[BBA]$, $[BBC]$, $[CCA]$, and $[CCB]$
and hence there are $6$ cyclic orders in this type for a total of $486$ SDSs.
The same arguments work for the second type that contains $[ABC]$ and $[ACB]$.
These add $162$ SDSs of period $27$.

The last type in which all three sequences, which are interleaved, are the same is slightly more complicated.
The first one is taken in its zero shift, the second one in shift $i$, and the third one in shift $j$, $0 \leq i,j \leq 8$.
But when we shift the SDS of length $27$ to the left, we obtain that the new first sequence is at shift $i$, the second
sequence is at shift $j$, and the third sequence is at shift $1$. This is equivalent to the situation that the new first sequence is at the zero
shift, the second sequence is at shift $j-i$, and the third sequence is at shift $1-i$. Another shift to the left we obtain a new first sequence
with shift $j$, the new second sequence will be at shift $1$, and the third one at shift~$2$.
This is equivalent to the situation that the new first sequence is at the zero
shift, the second sequence is at shift $1-j$, and the third sequence is at shift $2-j$.
All these three possible ways for interleaving, where we start with a sequence at its zero shift, yield the same SDS of period $27$.
All these three possible shifts form the same SDS sequence and hence the last two should be omitted.
Thus, we have only $27$ SDSs when we interleave the three $A$'s, $27$ SDSs for three $B$'s and
$27$ SDSs for three $C$'s for a total of $81$ SDSs of period $27$.
This scenario is demonstrated in the following table.

$$
\begin{array}{lccccc}
 &\text{first~shift~vector} & \text{after~a~shift} & \text{second~shift~vector} & \text{another~shift} & \text{third~shift~vector} \\
& [0,0,0] & [0,0,1] & [0,0,1] & [0,1,1] & [0,1,1] \\
& [0,0,2] & [0,2,1] & [0,2,1] & [2,1,1] & [0,8,8] \\
& [0,0,3] & [0,3,1] & [0,3,1] & [3,1,1] & [0,7,7] \\
& [0,0,4] & [0,4,1] & [0,4,1] & [4,1,1] & [0,6,6] \\
& [0,0,5] & [0,5,1] & [0,5,1] & [5,1,1] & [0,5,5] \\
& [0,0,6] & [0,6,1] & [0,6,1] & [6,1,1] & [0,4,4] \\
& [0,0,7] & [0,7,1] & [0,7,1] & [7,1,1] & [0,3,3] \\
& [0,0,8] & [0,8,1] & [0,8,1] & [8,1,1] & [0,2,2] \\
& [0,1,0] & [1,0,1] & [0,8,0] & [8,0,1] & [0,1,2] \\
& [0,1,3] & [1,3,1] & [0,2,0] & [2,0,1] & [0,7,8] \\
& [0,1,4] & [1,4,1] & [0,3,0] & [3,0,1] & [0,6,7] \\
& [0,1,5] & [1,5,1] & [0,4,0] & [4,0,1] & [0,5,6] \\
& [0,1,6] & [1,6,1] & [0,5,0] & [5,0,1] & [0,4,5] \\
& [0,1,7] & [1,7,1] & [0,6,0] & [6,0,1] & [0,3,4] \\
& [0,1,8] & [1,8,1] & [0,7,0] & [7,0,1] & [0,2,3] \\
& [0,2,4] & [2,4,1] & [0,2,8] & [2,8,1] & [0,6,8] \\
& [0,2,5] & [2,5,1] & [0,3,8] & [3,8,1] & [0,5,7] \\
& [0,2,6] & [2,6,1] & [0,4,8] & [4,8,1] & [0,4,6] \\
& [0,2,7] & [2,7,1] & [0,5,8] & [5,8,1] & [0,3,5] \\
& [0,3,2] & [3,2,1] & [0,8,7] & [8,7,1] & [0,8,2] \\
& [0,3,6] & [3,6,1] & [0,3,7] & [3,7,1] & [0,4,7] \\
& [0,4,2] & [4,2,1] & [0,7,6] & [7,6,1] & [0,8,3] \\
& [0,4,3] & [4,3,1] & [0,8,6] & [8,6,1] & [0,7,2] \\
& [0,5,2] & [5,2,1] & [0,6,5] & [6,5,1] & [0,8,4] \\
& [0,5,3] & [5,3,1] & [0,7,5] & [7,5,1] & [0,7,3] \\
& [0,5,4] & [5,4,1] & [0,8,5] & [8,5,1] & [0,6,2] \\
& [0,6,3] & [6,3,1] & [0,6,4] & [6,4,1] & [0,7,4]
\end{array}
$$

Now, we consider the three SDSs of period $9$ again. They have the following {\bf \emph{adjacency}} properties.
For the sequence $A$ we can observe that $A$ differ in exactly $3$ positions from $\bE A$ and from $\bE^{-1} A$.
For the sequence $B$ we can observe that $B$ differ in exactly $3$ positions from $\bE^2 B$ and from $\bE^{-2} B$.
For the sequence $C$ we can observe that $C$ differ in exactly $3$ positions from $\bE^4 C$ and from $\bE^{-4} C$.

Consider, for example, the subtype $[AAB]$. There are $81$ distinct SDSs with this cyclic order associated
with $81$ shift vectors, e.g., all the shift vectors of the form $[0,i,j]$, $0 \leq i,j \leq 8$. These shift vectors
will be ordered so that two consecutive ones (including the last and the first) $[0,i,j]$ and $[0,i',j']$ have one of the following
two properties:
\begin{itemize}
\item $i=i'$ and $\abs{j-j'}=2$.

\item $j=j'$ and $\abs{i-i'}=1$.
\end{itemize}
This arrangement is guaranteed by the adjacency properties to ensure that the two consecutive SDSs of period~$27$ obtained by interleaving
using the corresponding shifts will differ in exactly three positions. This arrangement is quite simple, e.g.,
for a fixed $i$, if the first shift vector in this arrangement is $[0,i,j]$ the next $9$ shift vectors are
$$
\begin{array}{lc}
& [0,i,j+2] \\
& [0,i,j+4] \\
& [0,i,j+6] \\
& [0,i,j+8] \\
& [0,i,j+1] \\
& [0,i,j+3] \\
& [0,i,j+5] \\
& [0,i,j+7] \\
& [0,i+1,j+7] \\
\end{array} .
$$
The coordinate in which all the SDSs are in their zero shift will be called {\bf \emph{pivot}}.
This arrangement can start with any shift vector $[0,i,j]$ and hence it will end with the shift vector $[0,i-1,j]$.
The same ideas work for the other cyclic orders of this type. For $[AAC]$ and two consecutive shift vectors $[0,i,j]$ and $[0,i',j']$
we will have either $\abs{i-i'}=1$ or $\abs{j-j'}=4$. Similar arrangement will be done for $[BBA]$, $[BBC]$, $[CCA]$, and $[CCB]$.
For the second type that contains $[ABC]$ and $[ACB]$, the same idea will work.

For the third type that contains $[AAA]$, $[BBB]$, and $[CCC]$, the same idea does not work since we have only 27 shift
vectors for each subtype with no specific rule to arrange them. Hence, they will be left without an arrangement
with cycles of length $27$ as was done for the other $8$ subtypes.

The next step is to merge the $8$ cycles of the $8$ subtypes. For this step, we consider again the three SDSs of period $9$, $A$, $B$, and $C$.
For the sequence $A$ we can observe that $A$ differs in exactly $3$ positions from $B$ and from $\bE^{-2} B$.
For the sequence $B$ we can observe that $B$ differs in exactly $3$ positions from $C$ and from $\bE^4 C$.
For the sequence $C$ we can observe that $C$ differs in exactly $3$ positions from $\bE^3 A$ and from $\bE^4 A$.

The subtypes with their cyclic orders will be ordered in a way that any two consecutive ones (including the last and the first) differ in
exactly one sequence as follows by their cyclic order with their first shift vector and the last shift vector.
$$
\begin{array}{lccc}
& \text{cyclic~order} & \text{first~shift~vector} & \text{last~shift~vector}\\
& [AAB] & [0,0,0] & [0,0,2] \\
& [BAB] & [0,0,2] & [0,0,0] \\
& [BAC] & [0,0,0] & [0,1,0] \\
& [CAC] & [0,1,0] & [0,0,0] \\
& [CBC] & [0,0,0] & [0,0,4] \\
& [BBC] & [0,0,4] & [0,0,0] \\
& [ABC] & [0,0,0] & [0,0,4] \\
& [AAC] & [0,0,4] & [0,0,0] \\
\end{array}
$$

Now we need to insert the $81$ SDSs of $[AAA]$, $[BBB]$, and $[CCC]$ into the list with 648 SDSs. Each one has $27$ SDSs and each SDS can be
interleaved using one of three possible shift vectors. The $27$ SDSs of $[AAA]$ are inserted between SDSs of $[AAB]$ with
the following rule. Assume the SDS obtained from the shift vector $[0,i,j]$ has to be inserted. It will be inserted
between the two consecutive SDSs of $[AAB]$ with the shift vectors $[0,i,j]$ and $[0,i,j+2]$.
The $27$ SDSs of $[BBB]$ are inserted between SDSs of $[BBC]$ with
the following rule. Assume the SDS obtained from the shift vector $[0,i,j]$ has to be inserted. It will be inserted
between the  two consecutive SDSs of $[BBC]$ with the shift vectors $[0,i,j]$ and $[0,i,j+4]$.
The $27$ SDSs of $[CCC]$ are inserted between SDSs of $[CAC]$ with
the following rule. Assume the SDS obtained from the shift vector $[0,i,j]$ has to be inserted. It will be inserted
between the  two consecutive SDSs of $[CAC]$ with the shift vectors $[0,i+3,j]$ and $[0,i+4,j]$.

Altogether we have $729$ SDSs of period $27$ and now we should complete their order to satisfy the conditions of
Theorem~\ref{thm:self_dualST_NB} and also Properties (P1) and (P2).
Property (P1) is satisfied since the ordering we done so far is cyclic.
Property (P2) is satisfied since each one of the three sets of coordinates for the interleaving
is chosen not to be a pivot at least once in the $8$ cyclic orders of the subtypes that are not $[AAA]$, $[BBB]$,
and $[CCC]$. To satisfy the conditions of Theorem~\ref{thm:self_dualST_NB} we would like first that the
sequence that starts with $9$ \emph{zeroes} will be followed by the sequence that starts with $8$ \emph{zeroes}
followed by a $2$. For this, we first pull out $[AAA]$ with shift vectors $[0,0,0]$ and $[0,0,4]$ outside
and observe the current ordering of the following consecutive shift vectors in the subtype $[AAC]$:
$[0,0,4]$, $[0,0,0]$, $[0,0,5]$, $[0,0,1]$,
and $[0,0,6]$. We make the following change in these $5$ shift vectors with the two shift vectors of $[AAA]$.
These are represented by the following $7$ pairs which will be consecutive in the list:
$$
\{ [AAC],[0,0,4]\}, \{[AAC],[0,0,0]\}, \{[AAA],[0,0,4]\}, \{[AAC],[0,0,1]\},
$$
$$
\{ [AAC],[0,0,5]\}, \{[AAA],[0,0,0]\}, \{[AAC],[0,0,6]\}.
$$
Now, $\{[AAA],[0,0,0]\}$ and $\{[AAC],[0,0,6]\}$ start with $9$ consecutive \emph{zeroes} and $8$ consecutive
\emph{zeroes} followed by a $2$, respectively, to complete the requirements of Theorem~\ref{thm:self_dualST_NB}
and the required seed for $p=3$.

\section{Seed initial arrangement for $p=5$}
\label{sec:seed_p5}

The initial arrangement for $p=5$ and $n=2$ was found using the arrangement of differences,
There are $125$ SDSs in this case, and we consider first the sequences of differences, of length $5$ that start with $00$ (whose sum is $1$ modulo $5$)
in lexicographic order. There are 20 such sequences. We continue and add a constant $1$ to the entries of these sequences to obtain
the sequences of differences that start with $11$. Cyclic sequences that already appear in the list of sequences starting with $00$
are ignored. To these sequences we add again the constant $1$ to obtain sequences of differences that start with $22$ and ignore those
that already appeared in the lists that start with $00$ or $11$. The same is done with sequences starting with $33$ and $44$.
Five lists are obtained and they contain $85$ sequences out of the $125$ sequences.
The list of $14$ sequences starting with $44$ are ordered first. The same order, with two more sequences, is used for the sequences
starting with $33$, and for the $17$ sequences starting with $22$, the $18$ sequences starting with $11$, and finally $20$ sequences
starting with $00$. There are an additional $40$ sequences of differences.
The ones starting with either $02$, or $13$, or $24$, or $30$, or $41$, respectively ($2$ sequences in each list).
These are merged within the lists of $00$, $11$, $22$, $33$, and $44$, respectively.
We also have the ones starting with either $01$, or $12$, or $23$, or $34$,
or $40$ ($4$ sequences in each list).
The ones starting with either $03$, or $14$, or $20$, or $31$, or $42$ ($2$ sequences in each list). These sequences
are ordered in five lists of length $6$. Each such list is used to merge two of the previous five lists.
Having now a cyclic ordering, a list of SDSs satisfying the properties of the initial arrangement are obtained from
this list of sequences of differences. Property (P1) and the requirement for the first and the last SDSs in the list
are satisfied easily by considering the SDS that starts with $5$ \emph{zeroes} as the first SDS in the list.
Property (P2) is satisfied by itself, i.e., there are $125$ SDSs in this initial arrangement, so it was reasonable that
for each one of the first $5$ coordinates, there will be a pair of consecutive SDSs that differ in this coordinate.
The final list of $125$ SDSs satisfying the initial arrangement and properties (P1) and (P2) is presented here (the first
$15$ digits of each SDS are presented).

\begin{example}
$$
\begin{array}{c}
   000001111122222 \\
   020001211123222 \\
   020001311124222 \\
   030001411120222 \\
   040001011121222 \\
   140002011131222 \\
   130002411130222 \\
   110002211133222 \\
   120002311134222 \\
   220003311144222 \\
   240003011141222 \\
   210003211143222 \\
   230003411140222 \\
   233003441140022 \\
   033001441120022 \\
   433000441110022 \\
   434000401110122 \\
   432000431110422 \\
   431000421110322 \\
   431100422110332 \\
   430100412110232 \\
   434100402110132 \\
   433100442110032 \\
   423100342114032 \\
   423110342214033 \\
\end{array}
\hspace{-0.185cm}
\begin{array}{c}
   423210343214043 \\
   403210143212043 \\
   433210443210043 \\
   433410440210013 \\
   433010441210023 \\
   433110442210033 \\
   434110402210133 \\
   434310404210103 \\
   434410400210113 \\
   430410410210213 \\
   430110412210233 \\
   430010411210223 \\
   430040411010221 \\
   433040441010021 \\
   432040431010421 \\
   432020431310424 \\
   032021431320424 \\
   032011431220423 \\
   032001431120422 \\
   032041431020421 \\
   002041131022421 \\
   002141132022431 \\
   003141142022031 \\
   303144142002031 \\
   103142142032031 \\
\end{array}
\hspace{-0.185cm}
\begin{array}{c}
   104142102032131 \\
   100142112032231 \\
   101142122032331 \\
   102142132032431 \\
   112142232033431 \\
   114142202033131 \\
   110142212033231 \\
   120142312034231 \\
   122142332034431 \\
   121142322034331 \\
   121442320034311 \\
   101442120032311 \\
   141442020031311 \\
   141042021031321 \\
   141242023031341 \\
   141202023131342 \\
   141102022131332 \\
   141002021131322 \\
   141402020131312 \\
   341404020101312 \\
   342404030101412 \\
   302404130102412 \\
   302434130402410 \\
   302414130202413 \\
   312414230203413 \\
\end{array}
\hspace{-0.185cm}
\begin{array}{c}
   322414330204413 \\
   332414430200413 \\
   342414030201413 \\
   442410030211413 \\
   432410430210413 \\
   412410230213413 \\
   422410330214413 \\
   022411330224413 \\
   042411030221413 \\
   032411430220413 \\
   030411410220213 \\
   330414410200213 \\
   230413410240213 \\
   231413420240313 \\
   233413440240013 \\
   233013441240023 \\
   232013431240423 \\
   231013421240323 \\
   230013411240223 \\
   220013311244223 \\
   220023311344224 \\
   220123312344234 \\
   210123212343234 \\
   200123112342234 \\
   230123412340234 \\
\end{array}
\hspace{-0.185cm}
\begin{array}{c}
   230223413340244 \\
   230323414340204 \\
   230423410340214 \\
   230023411340224 \\
   231023421340324 \\
   231423420340314 \\
   231223423340344 \\
   231323424340304 \\
   232323434340404 \\
   232023431340424 \\
   232423430340414 \\
   232403430140412 \\
   230403410140212 \\
   234403400140112 \\
   234413400240113 \\
   234433400440110 \\
   334434400400110 \\
   334424400300114 \\
   334414400200113 \\
   334404400100112 \\
   304404100102112 \\
   304004101102122 \\
   300004111102222 \\
   200003111142222 \\
   100002111132222 \\
\end{array}
$$
\end{example}

\section*{Acknowledgement}


%


\end{document}